\title{Access structure in graphs in high dimension and application to secret sharing}
\author[1]{Anne Marin\footnote{anne.marin@telecom-paristech.fr}}
\author[2,1]{Damian Markham\footnote{damian.markham@telecom-paristech.fr}}
\author[2,3]{Simon Perdrix\footnote{Simon.Perdrix@imag.fr}}
\affil[1]{\small LTCI, INFRES, Telecom ParisTech, France}
\affil[2] {CNRS}
\affil[3]{LIG, Grenoble University, France}
\newcommand{\ket}[1]{|#1\rangle}
\newcommand{\bra}[1]{\langle#1|}
\newcommand{\C}{\mathbb {C}}
\newcommand{\F}{\mathbb {F}}
\newcommand{\cutrk}{\textup{cutrk}}
\newcommand{\prk}[1]{\textup{rk}_{#1}}
\newcommand{\rk}{\textup{rank}}
\newcommand{\supp}{\textup{sup}}
\newtheorem{theorem}{Theorem}
\newtheorem{lemma}{Lemma}
\newtheorem{definition}{Definition}
\newtheorem{corollary}{Corollary}
\newtheorem{example}{Example}
\date{}
\begin{document}

\maketitle

\begin{abstract}
We give graphical characterisation of the access structure to both classical and quantum information encoded onto a multigraph defined for prime dimension $q$, as well as explicit decoding operations for quantum secret sharing based on graph state protocols. We give a lower bound on $k$ for the existence of a $((k,n))_q$ scheme and prove, using probabilistic methods, that there exists $\alpha$ such that a random multigraph has an accessing parameter $k\leq \alpha n$ with high probability.
\end{abstract}

\section{Introduction}
In this work we consider encoding, and accessing, both quantum and classical information onto graph states of qudits - multipartite entangled states which are one to one corresponding to multigraphs (which we will consider as simple graphs with multiple edges). We are particularly interested in using these states for secret sharing.

Secret sharing is an important cryptographic primitive, which was first put forward classically in \cite{S}, and then extended to the quantum realm in \cite{HBB99, CGL99}. The aim of the protocol is for a dealer to distribute a secret (quantum or classical) to a set of players, in such a way that only authorized sets of players can access the secret, and unauthorized sets of players cannot (there may be sets of players which are neither authorized nor unauthorized). The sets of authorized and unauthorized players is called the access structure. Any secret sharing scheme of $n$ players can be loosely paramaterised by two numbers, $k$ and $k'$, such that any subset of $k$ players is an authorized set, whereas any subset of $k'$ players or less is unauthorized. We call such paramaterised schemes $(k,k',n)$ ramp schemes. In the case when $k'=k-1$, we say it is a threshold scheme, and simplify the notation to $(k,n)$.

In this work we consider two classes of quantum schemes, one class using quantum channels to distribute classical secrets, denoted CQ schemes \cite{HBB99}, and the other sharing quantum secrets \cite{CGL99,HBB99}, denoted QQ schemes. The notation CQ and QQ used here follows the work \cite{MS08,Keet10,MM12}, where both classes were phrased in the same language of graph states (first for qubits \cite{MS08} then qudits \cite{Keet10,MM12}). The equivalence of both schemes was shown in \cite{MM12}. Using the graph state formalism can be useful both practically - since graph states are amongst the most well developed multipartite entangled states experimentally - and theoretically, since graph states are rich in their uses in quantum information, and allow for graphical characterization of information flow, and access of information. The connection between error correction and secret sharing was understood early on \cite{CGL99}, and implies that for general access structures it is necessary to use high dimensional states to encode the secret \cite{MS08,MM12}. In \cite{KMMP09} an entirely graphical description of the access structure was given for the graph state protocols on qubits. This has led to many applications, for instance in proving lower and upper bounds on what is $k$ and $k'$ are possible in ramp schemes. We are naturally interested in doing the same for higher dimensional versions. 

The first result of this paper is to extend to higher dimension the characterisation of the access structure in a graph, previously done  in \cite{KMMP09} for 2-dimensional system. By gathering the graphical conditions and previous results, we show that the accessibility problem to quantum information can be reduced to study the classical information's one in both a set of player and its complementary (which was proved in \cite{KMMP09,JMP11} for $2$ dimensional system). Finally we use this result for the decoding phase of both QQ and CQ protocols, as we know \cite{MM12} that a CQ authorised is a QQ authorised set and vice versa. In the last part, we study the existence, as a function of $k$, of a $((k,n))_q$ scheme (this will be defined explicitly later, but can be understood as the underlying graph encoding which gives rise to $(k,n-k,n)$ QQ secret sharing schemes). We derive a lower bound over $k$, that is, there exists $\alpha$ such that every $(k,n-k,n)$ QQ secret sharing must satisfies $k> \alpha n$, and we use probabilistic method to find $c<1$ such that a $((cn,n))_q$ scheme exists with high probability.

\section{Background}

\subsection{Qudit graph states, $\mathbb F_q^*$-graphs, and multigraphs}

The \emph{qudit graph state} formalism \cite{BB06, KKKS05} consists of representing a quantum state using a weighted undirected graph where every vertex represents a $q$-dimensional quantum system and every edge, which has assigned an element from the finite field $\mathbb F_q$, represents intuitively the entanglement between the elementary systems (a formal definition is given in Definition \ref{def:mG}). Such graphs, labeled with elements of a finite field $\mathbb F_q$, are known as \emph{$\mathbb F_q^*$-graphs} \cite{KanteRao} and can be  can be interpreted as edge-colored graphs.  In this paper, we choose to interpret $\mathbb F_q^*$-graphs as multigraphs i.e., graphs with possibly parallel edges between  pairs of vertices. Albeit equivalent to the other interpretation of $\mathbb F_q^*$-graphs, we believe that the multigraph interpretation is relevant in the context of qudit graph states for secret sharing protocols, in particular for the graphical characterisation  of authorised and unauthorised sets of players (see Lemmas \ref{lem:suff_acc} and \ref{lem:suff_no}).

\begin{definition}[$q$-multigraphs] Given a prime number $q$, a $q$-multigraph $G$ is a pair $(V,\Gamma)$ where $V$ is the finite  set of vertices and $\Gamma : V\times V\to \mathbb F_q$ is  the adjacency matrix of $G$: for any $u,v\in V$, $\Gamma(u,v)$ is the multiplicity of the edge $(u,v)$ in $G$.
\label{def:mG}
\end{definition}

The term multigraph is used for $q$-multigraph when $q$ is clear from the context or irrelevant. 
In this paper, we  consider undirected simple multigraphs $G=(V,\Gamma)$ i.e., for any vertices $u,v\in V$, $\Gamma(u,v)=\Gamma(v,u)$ and $\Gamma(u,u)=0$. For our characterizations of encoding and accessing later on, it will be useful to introduce further concepts. We will see several examples of them along the way, but for now we state definitions.
Given a set $V$ of vertices, a vector $D:V\to  {\mathbb F^q}$ represents a multiset of vertices of $V$: for every $v\in V$, $D(v)\in \mathbb F_q$ is the multiplicity of $v$ in $D$. $\supp(D) = \{v\in V~|~D(v)\neq 0\bmod q\}$ is the support of $D$. For any multigraph $G = (V,\Gamma)$ and any multiset of vertices $D:V\to \mathbb F_q$, the matrix product  $\Gamma.D$ is the multiset of neighbours of $D$: for any $v\in V$, $v$ is a neighbour of $D$ with multiplicity $(\Gamma.D) (v) =  \sum_{u\in V} \Gamma(u,v).D(u) \bmod q$. 
In particular, for any vertex $u$, $\Gamma.\{u\}$ is the multiset of neighbours of $u$. We call $G[D]=(V',\Gamma')$  the sub-multigraph of $G=(V,\Gamma)$ induced by the multiset $D:V\to \F_q$, where $V' = V\cap \supp(D)$ and $\Gamma' :V'\times V'\to \mathbb F_q= (u,v) \mapsto D(u).\Gamma(u,v) .D(v) \bmod q$. Notice that the multiplicity of an edge in $G[D]$ is the multiplicity of this edge in the original graph $G$ times the multiplicity in $D$ of the two vertices connected by this edge.   For any $A,B\subseteq V$, $\Gamma[A,B]$  denotes the submatrix of $\Gamma$ whose columns correspond to the vertices in $A$ and  rows to the vertices in $B$.  $\Gamma[A,B]$ represents the edges which have one end in $A$ and the other one in $B$.

\begin{definition}[Qudit Graph State] Given a  $q$-multigraph $G =(V,\Gamma)$ with $V = \{v_1, \ldots , v_n\}$, let $\ket{G}\in \mathbb C^{q^{n}}$ be its associated qudit graph state defined as $$\ket G =\frac1{\sqrt {q^n}} \sum_{x=(x_1,\cdots,x_n)\in \F^n_q} \omega^{|G[x]|}\ket x$$ where $\omega$ is the $q^{\textrm{th}}$  root of unity and  $|G[x]|$ is the number of edges of the sub-multigraph  $G[x] = (V_x,\Gamma_x)$ induced by $x$, where $V_x = \{v_i\in V, x_i\neq 0\}$ and $\Gamma_x : V_x\times V_x \to \mathbb F_q = (v_i,v_j) \mapsto x_ix_j\Gamma(v_i,v_j)$.\\

\end{definition}

Qudit graph states satisfy the following fundamental fixpoint property. Given a $q$-multigraph $G=(V,\Gamma)$, $\ket G$ is the unique quantum state (up to a global phase) such that, for any $u\in V$, \begin{IEEEeqnarray}{C}\label{Eqn: stab} X_uZ_{\Gamma.\{u\}} \ket G = \ket G\label{S}\end{IEEEeqnarray} where 
$\Gamma.\{u\}$ is the multiset of neighbours of $u$, 
 $X = \ket b \mapsto \ket{b+1 \bmod q}$, $Z= \ket b \mapsto \omega^b\ket b$ are Pauli operators, and for any mulitset $D:V\to \mathbb F_q$, $Z_D := \bigotimes_{v\in V}Z_v^{D(v)}$.

\begin{example}We define the $3$-multigraph $G=(V,\Gamma)$ by $V=\{v_1,v_2, v_3, v_4, v_5\}$,\\
\begin{minipage}[c]{0.5\textwidth}
\center$\Gamma=\left[\begin{array}{ccccc} 0&0&1&0&1\\0&0&2&0&1\\1&2&0&2&0\\0&0&2&0&2\\1&1&0&2&0\end{array}\right]$\\
\end{minipage}
\begin{minipage}[c]{0.5\textwidth}
\scalebox{0.85}{\begin{tikzpicture}[shorten >=1, -, font=\footnotesize]
\tikzstyle{vertex}=[circle,draw,fill=black!10,minimum size=10pt,inner sep=0pt,font=\footnotesize]

\node[vertex] (d) at (0.2,1.5) {$v_4$};
\foreach \name/\y/\text in {P-1/0.2/v_5, P-3/2.8/v_3}
  \node[vertex,xshift=0cm,yshift=0cm] (\name) at (1.3,\y) {$\text$};

\foreach \name/\y/\text in {Q-1/0.75/v_1, Q-2/2.25/v_2}
  \node[vertex,xshift=0cm,yshift=0cm] (\name) at (2.7,\y) {$\text$};

\foreach \to in {1,3}
           {\draw[blue] (d) -- node {$2$}(P-\to);}
\foreach \from/\to in {1/1}
           {\draw[red] (P-\from) -- node {$1$}(Q-\to);}
\foreach \from/\to in {1/2}
          {\draw[red] (P-\from)  node[xshift=0.4cm,yshift=0.6cm] {$1$} -- (Q-\to);}
\foreach \from/\to in {3/1}
           {\draw[red] (P-\from)  node[xshift=0.4cm,yshift=-0.6cm] {$1$} -- (Q-\to);}
\draw[blue] (P-3) -- node {$2$} (Q-2);
\draw[red] (Q-1) -- node {$1$} (Q-2);

\end{tikzpicture}}
\end{minipage}
Let $A=\{v_1,v_2\}$ be a subset of $V$, and $D:A\rightarrow \F_3$ a multiset such that $D(v_1)=2$, $D(v_2)=1$. That is $D=\{v_1,v_1,v_2\}$. 
Then, with previous definitions, the graph induced by $D$ is $G[D]=$ 
\begin{minipage}{0.6cm}
\scalebox{0.85}{\begin{tikzpicture}[shorten >=1, -, font=\footnotesize]
\tikzstyle{vertex}=[circle,draw,fill=black!10,minimum size=10pt,inner sep=0pt,font=\footnotesize]
\foreach \name/\y/\text in {Q-1/0.75/v_1, Q-2/2.25/v_2}
  \node[vertex,xshift=0cm,yshift=0cm] (\name) at (2.7,\y) {$\text$};
\draw[blue] (Q-1) -- node {$2$} (Q-2);
\end{tikzpicture}}
\end{minipage}
The multiset of neighbours of $A$ is  $\{v_1,v_2,v_5, v_5\}$. 
%
The multiset of neighbours of $D$ is $\{v_1,v_2,v_2,v_3\}$. 
%
%

\end{example}

\subsection{Local complementation and cut rank}

The \emph{local complementation} \cite{B94} is a graph transformation which is incredibly useful for the study of graph states \cite{VdN04}. 
Indeed, if two graphs $G$ and $G'$  are locally equivalent (i.e. one can transform $G$ into $G'$ by means of a series of local complementations), they represent the same entanglement (i.e. there exists a local unitary transformation $U$ such that $\ket {G'}= U\ket G$) \cite{VdN04}. Local complementation is extended to multigraphs as follows \cite{KanteRao}:
Given a $q$-multigraph $G=(V,\Gamma)$, $u\in V$ and $\lambda \in \F_q$, the \emph{$\lambda$-local complementation at $u$ of $G$} is the $q$-multigraph ${G{\star^\lambda} u}=(V,\Gamma')$ such that $\forall v,w\in V$, $v\neq w$, $\Gamma'(v,w) = \Gamma(v,w)+\lambda. \Gamma(v,u).\Gamma(u,w) \bmod q$. Keet et al.~\cite{Keet10} have proved that for any $q$-multigraph $G=(V,\Gamma)$, any $u\in V$ and any $\lambda\in \F_q$,  there exists a local unitary transformation $U$ such that $\ket {G{\star^\lambda}u} = U\ket{G}$.

The \emph{cut rank} \cite{OumSeymour} 
is a set function which associates with every set $B$ of vertices the rank of the matrix describing the edges of the cut ($B$, $V{{{\setminus}}} B$): 
Given a multigraph $G=(V,\Gamma)$, let $\Gamma[B]:=\Gamma[B,V{{{\setminus}}} B]$ be the cut matrix of the cut $(B,V{{{\setminus}}} B)$, moreover for any $A,B\subseteq V$, let $\prk{G}(A,B) := \rk(\Gamma{[A,B]})$ and  $\cutrk_G(B) := \prk{G}(B,V{{\setminus}} B)$ be the cut rank of $B$. Notice that $\prk{G}(A,B) = \prk{G}(B,A)$ and $\cutrk_G(B) = \cutrk_G(V{{{\setminus}}} B)$.

 We point out in this paper that the cut rank, which is known to be invariant by local complementation \cite{KanteRao}, is a key parameter of $q$-multigraphs for the study of secret sharing protocols with qudit graph states. Indeed, Theorem \ref{thm:q_acc} states that the capability of a set of players to reconstruct a quantum secret is characterised by the discrete derivative of the cut rank function. Notice that the cut-rank of a bipartition is nothing but the Schmidt measure of entanglement of this bipartition in the corresponding graph state. This is shown for the qubit case in \cite{HEB03}, and easily extends to the qudit case. As a consequence, Theorem \ref{thm:q_acc} characterises the accessibility of a set of players as the derivative of the Schmidt measure of entanglement. 



\subsection{Description of the encoding:} \label{Section: encoding}

We now introduce the encoding of classical and quantum information onto graph states (CQ and QQ respectively), which will be the starting point for the secret sharing protocols defined in section \ref{Section: secret sharing}. For ease of notation we present the CQ encoding as deterministic, and in one basis. When used in the full CQ protocol this is randomised by measurement and choice of basis (described fully in section \ref{Section: secret sharing}). The ability of players to access encoded information (both classical and quantum) is fully described in graph theoretical language in section \ref{Section: access structure}.\\

\textbf{CQ encoding:}\ \\
Given a multigraph $G = (V,\Gamma)$ of order $n$ and a distinguished non isolated vertex $d\in V$, the corresponding CQ encoding of a classical secret  $s\in \F_q$ among $n-1$ players consists of the dealer preparing the state
\begin{IEEEeqnarray*}{C}
\ket {s_L} := Z^s_{\Gamma.\{d\}}\ket {G{\setminus}d },
\end{IEEEeqnarray*}
and sending one qudit to each player, where $G{\setminus}d = (V{\setminus} \{d\}, \Gamma[V{\setminus} \{d\},V{\setminus} \{d\}])$ is the multigraph obtained by removing the vertex $d$ and all its incident edges. 

In the CQ protocol (described in section \ref{Section: secret sharing}) the secret s is randomised by measurement on the dealer's vertex $d$ of the full graph state $|G\rangle$, and further, the encoding is randomised by choice of measurement basis - the dealer chooses at random $t\in T$, $T\subseteq \F_q$ and $|T|\geq 2$, and measures his qudit in the associated complementary basis $X^tZ$. Measuring in this $t$ basis will correspond exactly to using the above CQ encoding of the same secret value $s$ onto the complementary multigraph ${G{\star^t} d}$.\\

\textbf{QQ encoding:}

Given a multigraph $G = (V,\Gamma)$ of order $n$ and a distinguished non isolated vertex $d\in V$, the corresponding QQ encoding on a qudit graph state for sharing an arbitrary quantum secret $\ket \phi = \sum_{j=0}^{q-1} s_j\ket j\in \mathbb C^q$ among $n-1$ players consists, for the dealer, in preparing the state
$$\ket {\phi_L} = \sum_{j=0}^{q-1} s_jZ^j_{\Gamma.\{d\}}\ket {G{\setminus} d}=\sum_{j=0}^{q-1} s_j\ket{j_L}\nonumber
$$ and in sending one qudit of $\ket {\phi_L} $ to each player.

Notice that the preparation consists in applying the map $\ket j \mapsto Z_{\Gamma.\{d\}}^j\ket {G{\setminus} d}$ which is an isometry as long as $d$ is not an isolated vertex in $G$. We describe encoding procedures in appendix \ref{Section:appendix}.

\bigskip

The accessing structure of the protocols (i.e.~the description of the sets of players which can recover the secret, as well as those which have no information about the secret) is given in the next section which provides a graphical characterisation of the accessing structure for the secret sharing protocols using these encodings. Moreover, the operations the authorised sets of players  have to perform to reconstruct the secret are also described in the next section.

\section{Access Structure in a Graph in Higher Dimension:} \label{Section: access structure}
\subsection{Classical Information}

In this section, we show, when the secret is classical, that 
 the protocol is perfect (i.e.~every set of players is either able to recover the secret or has no information about the secret), and that the accessing structure is graphically characterised by a simple  rank-based function:

\begin{theorem}\label{thm:class_char} Given a $q$-multigraph $G=(V,\Gamma)$ and a distinguished vertex $d\in V$, a set $B\subseteq V{{\setminus}} \{d\}$ of players can recover a classical secret for the corresponding CQ encoding if and only if $\pi_G(B,d) =1$, where
 $$\pi_G(B,d):= \cutrk_G(B)-\cutrk_{G\setminus d}(B)$$
 \end{theorem}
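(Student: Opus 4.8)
The plan is to recast $\pi_G(B,d)$ as a concrete linear-algebraic condition over $\F_q$, and then translate each side of the equivalence into a statement about Pauli operators acting on the qudits held by $B$. \textbf{Reduction.} By definition $\cutrk_G(B)=\rk(\Gamma[B,V\setminus B])$ and $\cutrk_{G\setminus d}(B)=\rk(\Gamma[B,V\setminus(B\cup\{d\})])$, and these two matrices differ only in that the first has one extra row, the row indexed by $d$, whose entries are $(\Gamma(d,v))_{v\in B}=(\Gamma.\{d\})|_B$. Hence $\pi_G(B,d)\in\{0,1\}$, with $\pi_G(B,d)=1$ exactly when this $d$-row is linearly independent from the rows indexed by $V\setminus(B\cup\{d\})$. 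Dualising, $\pi_G(B,d)=1$ iff there is a multiset $y$ supported on $B$ with $(\Gamma.y)(w)=0$ for all $w\in V\setminus(B\cup\{d\})$ while $(\Gamma.y)(d)\neq 0$; and $\pi_G(B,d)=0$ iff the $d$-row lies in the span of the others, i.e. there is a multiset $c$ supported on $V\setminus(B\cup\{d\})$ with $(\Gamma.c)|_B=(\Gamma.\{d\})|_B$.

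\textbf{Sufficiency ($\pi_G(B,d)=1$ recovers the secret).} Fix a witness $y$ as above and write $G\setminus d=(V\setminus\{d\},\Gamma')$. Composing the stabiliser fixpoint property over the vertices of $y$ yields a generalised Pauli $K_y:=X_y\,Z_{\Gamma'.y}$, with $X_y:=\bigotimes_v X_v^{y(v)}$, that stabilises $\ket{0_L}=\ket{G\setminus d}$ up to a fixed phase $\omega^{c_0}$. Since $\Gamma'.y$ agrees with $\Gamma.y$ on $V\setminus\{d\}$ and $\Gamma.y$ vanishes off $B\cup\{d\}$, the operator $K_y$ is supported entirely on $B$. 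Commuting $X_y$ past the encoding operator $Z^s_{\Gamma.\{d\}}$ produces precisely the phase $\omega^{s(\Gamma.y)(d)}$ (the diagonal $Z$-parts commute), so $K_y\ket{s_L}=\omega^{c_0+s(\Gamma.y)(d)}\ket{s_L}$. As $q$ is prime and $(\Gamma.y)(d)\neq 0$, the map $s\mapsto s(\Gamma.y)(d)\bmod q$ is a bijection of $\F_q$; the eigenvalues are pairwise distinct, so measuring the $B$-local observable $K_y$ returns $s$ with certainty.

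\textbf{Necessity ($\pi_G(B,d)=0$ gives no information).} Fix $c$ as above, so that the stabiliser $K_c=X_c\,Z_{\Gamma'.c}$ of $\ket{0_L}$ has its $Z$-part restricted to $B$ equal to $Z_{(\Gamma.\{d\})|_B}$, the $B$-part of the encoding operator. Consequently $Z^s_{\Gamma.\{d\}}=(K_c)^s\,W^{(s)}$ for a Pauli $W^{(s)}$ supported on the remaining players $V\setminus(B\cup\{d\})$. Moving $(K_c)^s$ onto $\ket{0_L}$ (it acts there as an $s$-dependent but $B$-trivial phase, picking up at worst a further phase when commuted past $W^{(s)}$) gives $\ket{s_L}=(\text{phase})\,W^{(s)}\ket{0_L}$ with $W^{(s)}$ acting only off $B$. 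Tracing out the other players, the unitary $W^{(s)}$ drops out, so $\rho_B^{(s)}=\rho_B^{(0)}$ for every $s$: the set $B$ has no information about the secret. Together with $\pi_G(B,d)\in\{0,1\}$ this establishes the equivalence, and incidentally that the scheme is perfect.

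\textbf{Main obstacle.} The delicate part is the phase bookkeeping: over $\F_q$ one has $Z_vX_v=\omega X_vZ_v$, so every product of stabilisers and every commutation past $Z^s_{\Gamma.\{d\}}$ generates an $\F_q$-linear phase that must be tracked carefully. The point used twice is that the only $s$-dependence is the single term $\omega^{s(\Gamma.y)(d)}$ in the sufficiency argument, and that in the necessity argument this dependence can be absorbed entirely into a complement-local Pauli. A secondary care point is the reduction step: one must verify that deleting $d$ removes exactly one row of the cut matrix and that $\Gamma.y$ and $\Gamma.c$ restricted to $V\setminus\{d\}$ coincide with the neighbour maps of $G\setminus d$, so that the witnesses $y$ and $c$ genuinely yield $B$-local, respectively complement-local, operators.
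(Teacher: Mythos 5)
Your proof is correct and follows essentially the same route as the paper: the linear-algebraic "dualisation" of $\pi_G(B,d)$ is the content of the paper's Corollaries \ref{cor:suff_acc} and \ref{cor:suff_no}, your $B$-supported stabiliser measurement $K_y$ is the paper's Lemma \ref{lem:suff_acc}, and your factorisation $Z^s_{\Gamma.\{d\}}=(K_c)^s W^{(s)}$ with $W^{(s)}$ supported off $B$, followed by the partial-trace argument, is the paper's Lemma \ref{lem:suff_no}. The only (cosmetic) differences are that you compose stabilisers of $G\setminus d$ directly instead of restricting stabilisers of $G$, and you phrase the rank conditions via orthogonality/annihilators rather than solvability of a linear system.
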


A graphical interpretation of Theorem \ref{thm:class_char} is that a set $B$ is accessible if and only if the presence of the `dealer vertex' $d$ increases the rank of the cut between $B$ and the rest of the vertices.

The rest of the section is dedicated to the proof of Theorem \ref{thm:class_char}.

First, we prove that a set $B$ of players can recover a classical secret if, roughly speaking,  there exists a multiset $D$ of them which is not `seen' from outside except by the `dealer':

\begin{lemma}\label{lem:suff_acc} Given a $q$-multigraph $G = (V,\Gamma)$ and $d\in V$, a set
$B\subseteq V{{\setminus}} \{d\}$ of players can recover a classical secret for the corresponding CQ encoding if there exists a multiset $D:B\rightarrow \F_q$ such that $\supp(\Gamma[B,V{{\setminus}} B].D) = \{d\}$ i.e.,
\begin{itemize}
\item the number of neighbours of $d$  in $D$ is not congruent to $0\bmod q$;
\item $\forall u\in V{{\setminus}} (B{\cup} \{d\})$, the number of neighbours of $u$ in $D$ is congruent to $0\bmod q$.
\end{itemize}
\end{lemma}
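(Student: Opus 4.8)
The plan is to exhibit, for any set $B$ satisfying the hypothesis, an explicit measurement (or local operation followed by measurement) that the players in $B$ can perform to recover the classical secret $s$. The key structural fact to exploit is the stabiliser/fixpoint property (\ref{S}): for each vertex $u$, $X_u Z_{\Gamma.\{u\}}$ fixes $\ket G$. The natural candidate operator to build from the multiset $D$ is the product $\prod_{u\in B} (X_u Z_{\Gamma.\{u\}})^{D(u)}$, which equals $X_D Z_{\Gamma.D}$ (up to a phase coming from the non-commutation of $X$ and $Z$ on shared vertices) and which also stabilises $\ket G$. The hypothesis $\supp(\Gamma[B,V\setminus B].D)=\{d\}$ says precisely that, outside $B$, this operator acts as $Z$ only on the dealer vertex $d$ and trivially on every other non-player vertex.

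First I would write out $X_D Z_{\Gamma.D}$ and split the $Z$-part according to the partition $V = B \sqcup (V\setminus B)$. On $V\setminus B$ the $Z$-support is $\Gamma[B,V\setminus B].D$, which by assumption is supported exactly on $d$ with some nonzero multiplicity $c:=(\Gamma[B,V\setminus B].D)(d)\neq 0 \bmod q$. So the stabiliser relation factorises as an operator $M_B$ acting only on the players' qudits in $B$, tensored with $Z_d^{c}$ acting on the dealer's (removed) vertex. The CQ encoded state $\ket{s_L}=Z^s_{\Gamma.\{d\}}\ket{G\setminus d}$ is obtained from $\ket G$ by measuring/removing $d$ in the computational basis, so the action of $Z_d^c$ on $\ket G$ translates, after tracing out or fixing $d$, into the phase $\omega^{cs}$ on $\ket{s_L}$ (this is where the $Z^s_{\Gamma.\{d\}}$ in the encoding and the neighbourhood $\Gamma.\{d\}$ interact). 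Thus $M_B \ket{s_L} = \omega^{-cs}\ket{s_L}$, i.e.\ $\ket{s_L}$ is an eigenvector of the players-only operator $M_B$ with eigenvalue $\omega^{-cs}$ (signs/conventions to be fixed during the calculation).

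Since $M_B$ is a Pauli operator supported entirely on $B$, the players in $B$ can measure it jointly using only local operations on their own qudits, obtain the eigenvalue $\omega^{-cs}$, and since $c\neq 0\bmod q$ is invertible in $\F_q$ (here primality of $q$ is essential), they can invert to recover $s=-c^{-1}\cdot(\text{measured exponent})\bmod q$. This gives exactly the recovery of the classical secret claimed, and it is worth noting that the construction is constructive, matching the paper's promise of explicit decoding operations.

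The main obstacle I anticipate is bookkeeping the phases correctly: the operator $X_D Z_{\Gamma.D}$ is only the product $\prod_u (X_u Z_{\Gamma.\{u\}})^{D(u)}$ up to a $\omega$-phase arising because $X$ and $Z$ on the same qudit satisfy $ZX=\omega XZ$, and there are self-neighbour/diagonal contributions (edges within $G[D]$) that must be accounted for. One must check these internal phases are either trivial or a fixed known quantity independent of $s$, so that they do not obstruct reading off $s$ linearly. A secondary subtlety is that $M_B$ carries both $X$-type and $Z$-type factors on the players' qudits, so ``measuring $M_B$'' means measuring in the eigenbasis of a genuine Pauli product; verifying that this is a legitimate local measurement for the set $B$ (it is, since $M_B$ is supported on $B$) and that its outcome is deterministic on $\ket{s_L}$ is the crux that makes the decoding well-defined.
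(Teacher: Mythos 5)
Your proposal is correct and follows essentially the same route as the paper's own proof: both form the product of stabilisers $\prod_{u\in B}(X_uZ_{\Gamma.\{u\}})^{D(u)}$, use the hypothesis $\supp(\Gamma[B,V{\setminus}B].D)=\{d\}$ to factor it (up to a constant phase) as a Pauli operator supported only on $B$ tensored with a power of $Z_d$, and then deduce that $\ket{s_L}=Z^s_{\Gamma.\{d\}}\ket{G{\setminus}d}$ is a deterministic eigenvector of the players-only operator with eigenvalue $\omega^{-cs}$, which the players measure to recover $s$. The only cosmetic difference is that the paper normalises $D$ at the outset so that $c=1$ (replacing $D$ by $u\mapsto(\Gamma.D)(d)^{-1}.D(u)$), whereas you carry a general invertible $c$ and invert it at the end.
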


\begin{proof}
Given  $B \subseteq V$ and $D: B\to \mathbb F_q$ such that $\supp(\Gamma[B,V{{\setminus}} B].D) = \{d\}$. W.l.o.g. we assume  the multiplicity of $d$ in $\Gamma.D$ is $1$ (otherwise we consider the multiset $D'=u\mapsto (\Gamma.D)(d)^{-1}.D(u)$ instead of $D$). The players in $B$ can recover the secret by measuring an appropriate product of stabilizers. Indeed, there exists $r\in \mathbb F_q$ such that $\prod_{u\in B} (X_uZ_{\Gamma.\{u\}})^{D(u)} = \omega^{r}X_D Z_{\Gamma.D}=Z_d\omega^r X_DZ_{\Gamma[V, V{\setminus} \{d\}].D}$.  
As $\prod_{u\in B} (X_uZ_{\Gamma.\{u\}})^{D(u)}\ket G=\ket G$, we deduce that $\omega^r X_D Z_{\Gamma[V, V{{\setminus}} \{d\}].D} \ket{G{{\setminus}} d} = \ket {G{{\setminus}} d}$. 
If the classical secret is $s\in \mathbb F_q$, \\ 
$\omega^r X_DZ_{\Gamma[V, V{{\setminus}} \{d\}].D} Z^s_{\Gamma.\{d\}} \ket {G{{\setminus}} d}= \omega^{r-s}Z^s_{\Gamma.\{d\}}X_DZ_{\Gamma[V, V{{\setminus}} \{d\}].D} \ket {G{{\setminus}} d}=  \omega^{-s}Z^s_{\Gamma.\{d\}} \ket {G{{\setminus}} d}$. So if the players in $B$ measure according to $\omega^r X_DZ_{\Gamma[V, V{{\setminus}} \{d\}].D} $, they get the outcome $-s\bmod q$, so  they recover the classical secret $s$. \end{proof}

Lemma \ref{lem:suff_acc} provides a sufficient condition for a set of players to be able to reconstruct a classical secret. Notice that this reconstruction is nothing but a Pauli measurement, so it can be done by means of local Pauli measurements and classical communications.

\begin{corollary}\label{cor:suff_acc}
Given a $q$-multigraph $G = (V,\Gamma)$, $d\in V$, and $B\subseteq V{{\setminus}} \{d\}$, if $\pi_G(B,d) =1$ then $B$ can reconstruct a classical secret for the corresponding CQ encoding.
\end{corollary}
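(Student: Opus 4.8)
The plan is to deduce this corollary from Lemma~\ref{lem:suff_acc} by a pure linear-algebra argument: I will show that the hypothesis $\pi_G(B,d)=1$ is exactly what guarantees the existence of a multiset $D:B\to\F_q$ with $\supp(\Gamma[B,V{\setminus} B].D)=\{d\}$, and then invoke the lemma. The first step is to unwind the two cut ranks appearing in $\pi_G(B,d)$ in terms of submatrices of $\Gamma$. Writing $M:=\Gamma[B,V{\setminus}(B{\cup}\{d\})]$ for the cut matrix between $B$ and the non-dealer outsiders, and $r_d:=\Gamma[B,\{d\}]$ for the single row recording how $d$ sees $B$, the matrix $\Gamma[B,V{\setminus} B]$ is precisely $M$ with the extra row $r_d$ stacked on. Hence $\cutrk_{G{\setminus}d}(B)=\rk(M)$ while $\cutrk_G(B)=\rk\!\left({M \atop r_d}\right)$, so that $\pi_G(B,d)=\rk\!\left({M \atop r_d}\right)-\rk(M)$, which is always $0$ or $1$, and equals $1$ if and only if $r_d$ is \emph{not} in the row space of $M$.

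Next I would translate the existence of the desired multiset $D$ into the same row-space condition. Viewing $D\in\F_q^B$ as a column vector, the requirement $\supp(\Gamma[B,V{\setminus} B].D)=\{d\}$ says exactly that $M.D=0$ while $r_d.D\neq 0$; that is, $D\in\ker M$ but $D\notin\ker(r_d)$. So such a $D$ exists if and only if the linear functional $r_d$ does not vanish on $\ker M$. The crux is then the standard duality fact over the field $\F_q$: a linear functional vanishes on $\ker M$ if and only if it lies in the row space of $M$ (the row space is the annihilator of the kernel). Applying this, $r_d$ fails to vanish on $\ker M$ precisely when $r_d\notin\mathrm{rowspace}(M)$.

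Combining the two equivalences, $\pi_G(B,d)=1$ holds if and only if there is a multiset $D:B\to\F_q$ with $\supp(\Gamma[B,V{\setminus} B].D)=\{d\}$. In particular, assuming $\pi_G(B,d)=1$ produces such a $D$, and Lemma~\ref{lem:suff_acc} then yields that $B$ can reconstruct the classical secret, completing the proof.

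I expect the only real subtlety, rather than a genuine obstacle, to be bookkeeping: keeping the row/column conventions of $\Gamma[A,B]$ straight (columns indexed by $B$, rows by $V{\setminus} B$) so that the decomposition of $\Gamma[B,V{\setminus} B]$ into $M$ and the dealer row $r_d$ is correct, and citing the kernel--row-space duality cleanly over $\F_q$. Everything else is the elementary observation that appending one row changes the rank by at most one.
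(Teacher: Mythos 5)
Your proof is correct and takes essentially the same route as the paper's: both split $\Gamma[B,V\setminus B]$ into the dealer row $\Gamma[B,\{d\}]$ and the block $M=\Gamma[B,V\setminus(B\cup\{d\})]$, and show that $\pi_G(B,d)=1$ yields exactly the multiset $D$ required by Lemma~\ref{lem:suff_acc}. The only difference is cosmetic packaging of the linear algebra: the paper phrases the existence of $D$ as solvability of the linear system with coefficient matrix $\Gamma[B,V\setminus B]$ and right-hand side supported on $d$, checked via equality of coefficient and augmented ranks, whereas you invoke the kernel--row-space duality over $\F_q$; these are equivalent formulations of the same rank-jump condition.
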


\begin{proof}Let $F = V{{\setminus}} (B{\cup} \{d\})$. According to lemma \ref{lem:suff_acc}, $B$ can recover a classical secret if there exists $D:B\rightarrow \F_q$ such that $\supp(\Gamma[B,V{{\setminus}} B].D) = \{d\}$. W.l.o.g. we can assume that the multiplicity of  $d$ in $\Gamma[B,V{{\setminus}} B].D$ is one. So $B$ can recover a classical secret if the system $\left(\begin{array}{l}\Gamma[B,\{d\}]\\ \hline \Gamma[B,F]\end{array}\right).x=\left(\begin{array}{l}1\\ \hline 0\end{array}\right)$
has a non zero solution, which is equivalent to $\rk\left(\begin{array}{l}\Gamma[B,\{d\}]\\ \hline \Gamma[B,F]\end{array}\right)=\rk\left(\begin{array}{c|c}\Gamma[B,\{d\}] & 1\\ \hline \Gamma[B,F]& 0\end{array}\right)$. 
Using the last column of the right-side matrix to cancel terms of the row $\Gamma[B,\{d\}]$, we are finally reduced to $\rk\left(\begin{array}{l}\Gamma[B,\{d\}]\\ \hline \Gamma[B,F]\end{array}\right)=1+\rk(\Gamma[B,F])$ i.e.,  $ \cutrk_G(B)-\prk{G}(B, F) =1=\pi_G(B,d)$.
\end{proof}

In the following,  a sufficient condition for a set of players to have no information about the secret is introduced: roughly speaking, a multiset of players $D$  which includes the dealer $d$, can `hide' the secret to every player who is connected to $D$ with a number of edges congruent to $0$ modulo $q$:

\begin{lemma}\label{lem:suff_no}
Given a $q$-multigraph $G = (V,\Gamma)$ and $d\in V$, a set
 $B\subseteq V{{\setminus}} \{d\}$ has no information about a classical secret for the corresponding CQ encoding if there exists $D:V{{\setminus}} B \rightarrow \F_q$, such that $D(d)\neq 0\bmod q$ and $\Gamma[V{{\setminus}} B, B].D=0$ i.e.,
\begin{itemize}
\item the multiplicity of $d$ in $D$ is not congruent to $0\bmod q$;
\item $\forall u\in B$, the number of neighbours of $u$ in $D$ is congruent to $0\bmod q$.
\end{itemize}
\end{lemma}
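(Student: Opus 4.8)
The plan is to mirror the structure of the proof of Lemma~\ref{lem:suff_acc} but on the ``hiding'' side: rather than exhibiting a stabilizer whose measurement reveals the secret, I would construct, from the hypothesised multiset $D$, a product of stabilizers that acts on the players in $B$ trivially (up to phase) and hence cannot carry any information about the secret outside of what is already determined by the reduced state on $B$. Concretely, since $D:V{\setminus}B\to\F_q$ satisfies $\Gamma[V{\setminus}B,B].D=0$, consider the operator $\prod_{u\in V{\setminus}B}(X_uZ_{\Gamma.\{u\}})^{D(u)}$. As in Lemma~\ref{lem:suff_acc}, this equals $\omega^{r}X_DZ_{\Gamma.D}$ for some $r\in\F_q$, and the key point is that the hypothesis $\Gamma[V{\setminus}B,B].D=0$ forces the $Z$-part of this operator to have \emph{no support on} $B$: the factors $X_D$ and $Z_{\Gamma.D}$ act only on vertices in $V{\setminus}B$. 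Thus this is a stabilizer of $\ket G$ supported entirely away from $B$.

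The second and central step is to translate this support condition into a statement about the reduced density matrix on $B$ being independent of the secret value $s$. First I would establish that for two distinct secret values $s,s'$ the encoded states $\ket{s_L}=Z^s_{\Gamma.\{d\}}\ket{G{\setminus}d}$ and $\ket{s'_L}$ have the same reduced state on $B$; equivalently, that the players in $B$ cannot distinguish them by any local operation. The mechanism is that applying $Z^s_{\Gamma.\{d\}}$ amounts to a $Z$-rotation concentrated (through the neighbourhood of $d$) in a way that the hiding operator, acting trivially on $B$, can ``absorb'' or commute past—so that the $s$-dependence of the global state is confined to the complement $V{\setminus}B$. The clean way to phrase this is: because $D(d)\neq 0\bmod q$, the stabilizer $\omega^r X_D Z_{\Gamma[V,V{\setminus}\{d\}].D}$ (on $G{\setminus}d$) includes a nontrivial $X$ on $d$'s former neighbours in a pattern that conjugates $Z^s_{\Gamma.\{d\}}$ to $\omega^{\text{(linear in }s)}Z^s_{\Gamma.\{d\}}$, exactly paralleling the phase computation $\omega^{r-s}$ in Lemma~\ref{lem:suff_acc}. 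The difference is that here the operator is trivial on $B$, so it relates the two secret-dependent states by an operator \emph{that does not touch $B$}, forcing equality of the reduced states on $B$.

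The step I expect to be the main obstacle is making the information-theoretic conclusion rigorous: passing from ``there is a stabilizer supported off $B$ whose conjugation shifts the secret phase'' to ``$\rho_B(s)=\rho_B(s')$ for all $s,s'$,'' and hence ``$B$ has zero mutual information with the secret.'' One must be careful that a single stabilizer only shows invariance of the reduced state under a particular phase shift; to get full independence of $s$ I would argue that the family of states $\{\ket{s_L}\}_{s\in\F_q}$ is generated from a fixed state by $Z^s_{\Gamma.\{d\}}$, and that an operator acting only on $V{\setminus}B$ and implementing the map $\ket{s_L}\mapsto\omega^{c s}\ket{s_L}$ certifies that the $s$-dependence factors through $V{\setminus}B$ alone, so tracing out $V{\setminus}B$ kills it. I would also handle the normalisation, assuming without loss of generality (as in the accessing lemma) that the relevant multiplicity is $1$, and I would verify that $\Gamma.D$ restricted to $B$ being zero is precisely the condition needed for the operator to be a tensor product of the identity on the $B$-registers, which is what ultimately yields the reduced-state equality.
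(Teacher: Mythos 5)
Your first step is sound: extending $D:V{\setminus}B\to\F_q$ by zero, the hypothesis $\Gamma[V{\setminus}B,B].D=0$ does make $K_D=\prod_{u\in V{\setminus}B}(X_uZ_{\Gamma.\{u\}})^{D(u)}=\omega^rX_DZ_{\Gamma.D}$ a stabilizer of $\ket G$ supported entirely on $V{\setminus}B$. The genuine gap is in your central mechanism. You claim that an operator acting only off $B$ and implementing $\ket{s_L}\mapsto\omega^{cs}\ket{s_L}$ ``certifies that the $s$-dependence factors through $V{\setminus}B$ alone, so tracing out $V{\setminus}B$ kills it.'' This is false, and in fact it is the opposite of what such a relation certifies: $M\ket{s_L}=\omega^{cs}\ket{s_L}$ with $c\neq 0$ is precisely the \emph{accessing} certificate of Lemma \ref{lem:suff_acc} --- whoever holds the support of $M$ measures $M$ and reads off $s$. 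Since a classical secret can be readable by complementary sets simultaneously, such an eigenvalue relation for an $M$ supported on $V{\setminus}(B{\cup}\{d\})$ is perfectly compatible with $B$ having full information. Concretely, take $q=2$ and the star graph with dealer $d$ adjacent to players $1$ and $2$: then $X_2$ is supported off $B=\{1\}$ and satisfies $X_2\ket{s_L}=(-1)^s\ket{s_L}$, yet $\rho_{\{1\}}(s)=Z^s\ket{+}\bra{+}Z^s$ depends on $s$, and player $1$ recovers the secret by measuring $X_1$. (This graph does not satisfy the lemma's hypothesis for $B=\{1\}$; the point is only that your certification principle fails.) A related confusion: the operator you call ``$\omega^rX_DZ_{\Gamma[V,V{\setminus}\{d\}].D}$ (on $G{\setminus}d$)'' still contains $X_d^{D(d)}$, is not a stabilizer of $\ket{G{\setminus}d}$, and does not act on the code states by a phase at all.

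What hiding actually requires is a \emph{shift} relation: an operator $M$ supported off $B$ with $M\ket{j_L}\propto\ket{(j+c)_L}$ for some $c\neq 0$; this gives $\rho_B(j)=\rho_B(j+c)$ for all $j$, and since $q$ is prime all $\rho_B(s)$ coincide. Your $K_D$ does contain the needed ingredient, namely the factor $X_d^{D(d)}$ with $D(d)\neq 0$, but to extract the shift you must project rather than conjugate: writing $K_D=X_d^{D(d)}Z_d^{\gamma}\otimes M$ with $M$ supported on $V{\setminus}(B{\cup}\{d\})$, apply $\bra{s}_d$ to both sides of $K_D\ket G=\ket G$ and use $\ket G=\frac1{\sqrt q}\sum_i\ket{i}_d\ket{i_L}$ to get $\ket{s_L}\propto M\ket{(s-D(d))_L}$, which is the required certificate. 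With that repair your route is correct and is genuinely different from the paper's: the paper never touches the full-graph stabilizers, but instead takes $R=\prod_{u\in V{\setminus}(B{\cup}\{d\})}(X_uZ_{\Gamma[V{\setminus}\{d\},V{\setminus}\{d\}].\{u\}})^{D(u)}$, a product of stabilizers of $\ket{G{\setminus}d}$, notes that the hypothesis (with $D(d)=1$) makes $Z_{\Gamma.\{d\}}R$ supported off $B$, and concludes that $\rho_B(s)=Tr_{V{\setminus}(B{\cup}\{d\})}\bigl((Z_{\Gamma.\{d\}}R)^s\ket{G{\setminus}d}\bra{G{\setminus}d}((Z_{\Gamma.\{d\}}R)^{\dagger})^s\bigr)$ is manifestly independent of $s$.
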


\begin{proof} W.l.o.g.~we assume $D(d)=1\bmod q$. Notice that $R\ket {G{\setminus}d}\bra {G{\setminus} d} R^\dagger  $ $= \ket {G{\setminus} d}\bra {G{\setminus} d}$ with $R = \prod_{u\in V{\setminus} (B{\cup}\{d\})}{(X_uZ_{\Gamma[V{\setminus} \{d\},V{\setminus} \{d\}].\{u\}})}^{D(u)}$. Moreover $R.Z_{\Gamma.\{u\}}$ is only acting on $V{\setminus}(B{\cup}\{d\})$, so the reduced density matrix for the players in $B$ is \\\centerline{$\begin{array}{cl}&Tr_{V{{\setminus}} (B{\cup} \{d\})}(Z^s_{\Gamma.\{d\}}\ket{G{{\setminus}} d}\bra{G{{\setminus}} d}{Z^{\dagger}}^s_{\Gamma.\{d\}}) \\=  &Tr_{V{{\setminus}} (B{\cup} \{d\})}({(Z_{\Gamma.\{d\}}R)}^s\ket{G{{\setminus}} d}\bra{G{{\setminus}} d}{(Z_{\Gamma.\{d\}}R)}^{\dagger s})\\=& Tr_{V{{\setminus}} (B{\cup} \{d\})}(\ket{G{{\setminus}} d}\bra{G{{\setminus}} d})\end{array}$} which does not depend on the secret, so the players in $B$ have no information about the secret. 

\end{proof}

\begin{corollary}\label{cor:suff_no}
Given a $q$-multigraph $G = (V,\Gamma)$, $d\in V$, and $B\subseteq V{{\setminus}} \{d\}$, if $\pi_G(B,d) =0$ then $B$ has no information about the classical secret for the corresponding CQ encoding.
\end{corollary}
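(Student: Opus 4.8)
The plan is to derive the hypothesis of Lemma~\ref{lem:suff_no} from the assumption $\pi_G(B,d)=0$, mirroring the rank manipulation already used for Corollary~\ref{cor:suff_acc}. First I would set $F = V{\setminus}(B{\cup}\{d\})$ and rewrite the two cut ranks appearing in $\pi_G$ as ranks of explicit submatrices. Since $G{\setminus} d$ merely deletes the row indexed by $d$ from the cut matrix, one has
$$\cutrk_G(B) = \rk\left(\begin{array}{l}\Gamma[B,\{d\}]\\ \hline \Gamma[B,F]\end{array}\right), \qquad \cutrk_{G\setminus d}(B) = \rk(\Gamma[B,F]),$$
so that $\pi_G(B,d)$ is exactly the rank increase obtained by appending the single row $\Gamma[B,\{d\}]$ to $\Gamma[B,F]$; in particular $\pi_G(B,d)\in\{0,1\}$.

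Next I would translate $\pi_G(B,d)=0$ into a linear dependence. The rank difference vanishes precisely when the row $\Gamma[B,\{d\}]$ lies in the row span of $\Gamma[B,F]$, i.e.\ when there exist coefficients $(c_w)_{w\in F}$ in $\F_q$ with $\Gamma[B,\{d\}] = \sum_{w\in F} c_w\,\Gamma[B,\{w\}]$. I would then define the multiset $D : V{\setminus} B\to\F_q$ by $D(d)=1$ and $D(w)=-c_w$ for $w\in F$. Rearranging the dependence relation gives $\sum_{w\in V\setminus B} D(w)\,\Gamma[B,\{w\}]=0$, which is exactly the statement $\Gamma[V{\setminus} B,B].D = 0$ (its $u$-th entry, for $u\in B$, being $\sum_{w}\Gamma(w,u)D(w)$). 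Since $D(d)=1\neq 0\bmod q$, the vector $D$ meets both requirements of Lemma~\ref{lem:suff_no}, and the conclusion that $B$ has no information follows at once.

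The only genuine care required is bookkeeping with the row/column conventions, and I expect this to be the main (and essentially only) obstacle: $\Gamma[B,\{d\}]$ is a single row (indexed by $d$, entries running over $B$), whereas $\Gamma[V{\setminus} B,B].D$ reads off, for each player $u\in B$, the weighted neighbour count $\sum_w \Gamma(w,u)D(w)$. Because $\Gamma$ is symmetric these two viewpoints are transposes of one another, so ``$\Gamma[B,\{d\}]$ lies in the span of the rows $\Gamma[B,F]$'' is the same as ``the $d$-column of $\Gamma[V{\setminus} B,B]$ lies in the span of its $F$-columns'', and it is the latter formulation that yields a kernel vector $D$ with nonzero $d$-component. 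It is worth stressing that this corollary is the exact complement of Corollary~\ref{cor:suff_acc}: together, and using $\pi_G(B,d)\in\{0,1\}$, they show that every $B$ is either accessing (when $\pi_G(B,d)=1$) or ignorant (when $\pi_G(B,d)=0$), which both establishes perfectness of the CQ scheme and completes the proof of Theorem~\ref{thm:class_char}.
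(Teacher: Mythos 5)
Your proposal is correct and follows essentially the same route as the paper's proof: both reduce $\pi_G(B,d)=0$ to the statement that the dealer's row/column of the cut matrix lies in the span of the rows/columns indexed by $F=V{\setminus}(B{\cup}\{d\})$ (the paper phrases this as solvability of the linear system $\Gamma[F,B].x = -\Gamma[V,B]\{d\}$), and then build the multiset $D$ with $D(d)=1$ required by Lemma~\ref{lem:suff_no}. Your transpose/symmetry bookkeeping is also handled correctly, so nothing is missing.
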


\begin{proof}Let $F = V{{\setminus}} (B{\cup} \{d\})$. According to lemma \ref{lem:suff_no}, $B$ has no information about classical secret if there exists $D:V{{\setminus}} B\rightarrow \F_q$ such that $D(d)=1\bmod q$ and $\Gamma[V{{\setminus}} B,B].D = 0$, 
so if $ \Gamma[F,B].C = -\Gamma[V,B]\{d\}$, where $C:F\to \mathbb F_q= u\mapsto D(u)$ is the restriction of $D$ to $F$. As a consequence, $B$ has no information about classical secret if the system $ \Gamma[F,B].x = -\Gamma[V,B]\{d\}$ has a non zero solution, which is equivalent to find a non zero solution to the system $ \Gamma[F,B].x = \Gamma[V,B]\{d\}$, so if $\rk( \Gamma[F,B]) =  \rk(\Gamma[V{{\setminus}} B,B])$ i.e., $\pi_G(B,d)=0$.
\end{proof}

\noindent {\bf Proof of Theorem \ref{thm:class_char}}. The proof of Theorem \ref{thm:class_char} follows from Corollaries \ref{cor:suff_acc} and \ref{cor:suff_no} and the fact that for every $B$, $0\le \pi_G(B,d)\le 1$. It proves that the encoding is perfect i.e., every set of players is either able to reconstruct the secret (when $\pi_G(B,d)=1$) or has no information about the secret (when $\pi_G(B,d)=0$).

\subsection{Quantum Information}

In the following we prove that the accessibility of a set a players is characterised by the derivative of the cut-rank function with respect to the dealer. \\

\begin{theorem}\label{thm:q_acc}
Given a $q$-multigraph $G$ with a distinguished dealer $d\in V(G)$, a set $B\subseteq V(G){{\setminus}} \{d\}$ of players can recover a quantum secret in the corresponding QQ encoding iff  $$\partial_d\cutrk_G(B)=-1$$ where $\partial_d\cutrk_G(B) = \cutrk_G(B{\cup} \{d\}) - \cutrk_G(B)$ is the discrete derivative of $\cutrk_G$ in $B$ with respect to $d$. 
\end{theorem}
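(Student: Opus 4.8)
The plan is to reduce the quantum statement to the classical characterisation of Theorem \ref{thm:class_char}, applied simultaneously to $B$ and to its complement, and then to express the discrete derivative $\partial_d\cutrk_G(B)$ as a difference of two instances of the function $\pi_G$. Write $F = V{\setminus}(B{\cup}\{d\})$ for the players outside $B$. The first ingredient is the quantum-to-classical reduction, established for qubits in \cite{KMMP09,JMP11} and phrased for qudit graph states through \cite{MM12}: a set $B$ can recover the quantum secret in the QQ encoding if and only if $B$ can recover a classical secret in the corresponding CQ encoding and, at the same time, the complement $F$ has no information about a classical secret. I would justify this for prime $q$ in two directions. The forward direction follows from no-cloning/monogamy (if $B$ reconstructs the quantum secret perfectly, then $F$ must be decoupled from it, hence learns nothing even about the computational-basis value), together with the observation that reconstructing the quantum secret entails reconstructing its amplitude, i.e.\ the CQ secret on $G$. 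The converse --- that amplitude access by $B$ together with the absence of amplitude information in $F$ already suffices for full reconstruction --- is the delicate step: recovering the phase of the quantum secret amounts to a CQ decoding in the conjugate basis, which by $t$-local complementation at $d$ corresponds to an encoding on $G{\star^t}d$, and the invariance of the cut rank under local complementation ties this back to the cut between $F$ and the rest.

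Granting this reduction, Theorem \ref{thm:class_char} converts it into a rank condition at once: $B$ is QQ-authorised if and only if $\pi_G(B,d)=1$ and $\pi_G(F,d)=0$. What remains is therefore a purely linear-algebraic identity, namely
\[
\partial_d\cutrk_G(B)\;=\;\pi_G(F,d)-\pi_G(B,d).
\]
To prove it I would partition $V = B\cup F\cup\{d\}$ and track the relevant ranks against the common base rank $r:=\rk(\Gamma[B,F])=\cutrk_{G\setminus d}(B)=\cutrk_{G\setminus d}(F)$. Using $\cutrk_G(S)=\cutrk_G(V{\setminus} S)$, one has $\cutrk_G(B)=\rk\big[\,\Gamma[F,B]\mid \Gamma[\{d\},B]\,\big]=r+\pi_G(B,d)$, because appending the single column recording the edges to $d$ raises the rank by $0$ or $1$ according to whether that column lies in the column space of $\Gamma[F,B]$, and this increment is exactly $\pi_G(B,d)$. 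Symmetrically, $\cutrk_G(B{\cup}\{d\})=\cutrk_G(F)=\rk\big[\,\Gamma[B,F]\mid \Gamma[\{d\},F]\,\big]=r+\pi_G(F,d)$. Subtracting yields the identity. Combined with the reduction, $\partial_d\cutrk_G(B)=-1$ holds precisely when $\pi_G(B,d)=1$ and $\pi_G(F,d)=0$, which is the QQ-authorisation condition; moreover, since each $\pi\in\{0,1\}$ the derivative always lies in $\{-1,0,1\}$, so the value $-1$ is attained exactly in the authorised case and the characterisation is complete.

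The main obstacle I anticipate is the converse half of the reduction for general prime $q$: showing that amplitude access by $B$ together with amplitude ignorance of $F$ forces reconstruction of \emph{both} amplitude and phase, and hence of the full quantum secret. The forward direction and the rank identity are essentially routine, whereas the converse must genuinely exploit the complementary-basis/local-complementation structure and the invariance of the cut rank, rather than following formally from the classical theorem alone. A secondary point requiring care is to confirm that the qudit no-cloning argument and the $G{\star^t}d$ correspondence behave uniformly across the admissible choices of measurement basis used in the CQ protocol.
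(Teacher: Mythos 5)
Your proposal is correct and takes essentially the same route as the paper: the paper likewise combines the classical characterisation (Theorem \ref{thm:class_char}), the reduction of quantum access to classical access in two mutually unbiased bases from \cite{MM12}, no-cloning \cite{CGL99} for the forward direction, and the invariance of the cut rank under local complementation \cite{KanteRao} for the converse, and your identity $\partial_d\cutrk_G(B)=\pi_G(F,d)-\pi_G(B,d)$ is exactly the rank algebra the paper carries out inline. The ``delicate'' converse you flag is resolved in the paper precisely as you sketch it: since $\partial_d\cutrk_G(B)=-1$ is preserved under local complementation at $d$, the set $B$ also accesses the classical secret in $G\star^1 d$, and the two-basis criterion of \cite{MM12} then yields quantum access.
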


\begin{proof}
It is known that $B$ can access a quantum secret in $G$ iff $B$ can access a classical secret in two mutual unbiased bases, say in $G$ and $G*^1d$ \cite{MM12}. Moreover $B$ can access a classical secret in $G$ iff $\pi_{G}(B,d)=1$, where 
$\pi_{G}(B,d) = \cutrk_{G}(B) -\prk{G}(B,V{\setminus} (B{\cup} \{d\}))$. \\
$(\Rightarrow)$ If $B$ can access a quantum secret, $B$ can access a classical secret and $V{{\setminus}} (B{\cup} \{d\})$ has no information about a quantum secret  \cite{CGL99}, which implies that $V{{\setminus}} (B{\cup} \{d\})$ cannot access  a classical secret. Thus $\pi_G(B,d) = 1$ and $\pi_G(V{{\setminus}} (B{\cup} \{d\}),\{d\})=0$. As a consequence $\pi_G(B,d)-\pi_G(V{{\setminus}} (B{\cup} \{d\}),\{d\}) =1$, so
$1 = \cutrk(B) - \prk{G}(B,V{{\setminus}} (B{\cup} \{d\})) - \cutrk(V{{\setminus}} (B{\cup} \{d\})) +  \prk{G}(V{{\setminus}} (B{\cup} \{d\}),B)= \cutrk(B) -\cutrk(V{{\setminus}} (B{\cup} \{d\})) = \cutrk(B) -\cutrk(B{\cup} \{d\})$.\\
$(\Leftarrow)$ If $\cutrk_G(B)= \cutrk_G(B{\cup} \{d\}) +1$, then $\pi_G(B,\{d\}) = 1$, so $B$ can access a classical secret in $G$. Moreover, since the cut rank is invariant by local complementation \cite{KanteRao}, $\cutrk_{G{\star}^1d}(B)= \cutrk_{G{\star}^1d}(B{\cup} \{d\}) +1$, so $B$ can also access a classical secret in $G{\star}^1d$.
\end{proof}

Notice that for any set $B$ of players, $\partial_d\cutrk_G(B)\in \{-1,0,1\}$: if $\partial_d\cutrk_G(B)=-1$, $B$ can recover the quantum secret; if  $\partial_d\cutrk_G(B)=1$ they have no information since $V{{\setminus}} (B{\cup} \{d\})$ can recover the quantum secret; and if $\partial_d\cutrk_G(B)=0$ they have some partial information about the secret.

Since the cut rank function is submodular \cite{OumSeymour}, 
its derivative is monotonic (decreasing): if $B\subseteq B'$, $\partial_d\cutrk_G(B)\ge \partial_d\cutrk(B')$. Indeed, if $B$ can recover the secret, any superset $B'$ of $B$ can recover it too; and if $B'$ has no information about the secret, any subset $B$ of $B'$  has no information too.

\section{Application to CQ and QQ protocols} \label{Section: secret sharing}

We now see how the encoding of section \ref{Section: encoding}, and the results on access structures in section \ref{Section: access structure} can be used to provide secret sharing protocols.  Following the prescription of \cite{MM12} (based on \cite{MS08,Keet10}, see also \cite{MM13}) we will now introduce two protocols, one for sharing  classical secrets over a quantum channel (CQ) and one for sharing a quantum secret (QQ), both based on a graph state associated with a multigraph. Both protocols can be understood as using the graph state as a channel between the dealer (associated with vertex $d$) and the players (the remaining vertices). In the CQ case this channel is used to perform an Ekert-like key distribution protocol between the dealer and authorised players, so that when completed the dealer and authorised players will share a random `dit' string which is unknown to anybody else. In the QQ case the channel is used to teleport the secret to the players such that only authorised sets of players can access the information (the QQ encoding in section \ref{Section: encoding} can be understood as this teleportation, see Appendix \ref{Section:appendix}). More details on the protocols and their relation to each other as well as error correction can be found in \cite{MM12}.

\subsection{Detailed protocols} \label{section:protocols}

Before we write the full protocols out, we first review some background on the graph state formalism, which will be the key in seeing how the stabilisers can be used to specify how authorised sets can access the information, given the satisfaction of the conditions outlined in the previous section.

Given a multigraph $G=(V,\Gamma)$, we begin with an illustrative expansion of the graph state $|G\rangle_V$ according to the $d$, $V{\setminus} \{d\}$ partition.
\begin{align}
\ket G = \frac1{\sqrt {q^n}} \sum_{x=(x_1,\cdots,x_n)\in \F^n_q}\omega^{|G[x]|}\ket x_V &=  \frac 1{\sqrt q}\sum_i \ket i_d Z^i_{\Gamma.\{d\}}\ket {G{\setminus} d}_{V{\setminus} \{d\}} \nonumber\\
&= \frac 1{\sqrt q}\sum_i \ket i_d  \ket{i_L}_{V{\setminus} \{d\}} \nonumber\\
&= \frac 1{\sqrt q}\sum_i \ket{i(t)}_d  \ket{i'_L(t)}_{V{\setminus} \{d\}}, \nonumber
\end{align}
for any $t\in \F_q$, where the second line follows from definitions in section \ref{Section: encoding}, corresponding to the CQ encoding achieved by the dealer measuring in the $Z$ basis. The third line corresponds to when the dealer measures in bases $X^tZ$ (explained in more detail later), where they are defined as $\ket{i(0)}=\ket{i}$, and
$\ket{i(t)}=\frac 1{\sqrt q} \sum_{j=0}^{q-1} \omega^{\frac{j(j-t)}{2t}-it^{-1}j} \ket j$ for $t=1...q-1$, so that $X^tZ\ket{i(t)}=\omega^i\ket{i(t)}$, and further $\ket{i'(0)_L}=\ket{i_L}=Z^i_{\Gamma.\{d\}}\ket {G{\setminus } d}$ and $\ket{i'_L(t)}:= \frac 1{\sqrt q}\sum_{k=0}^{q-1} \omega^{\frac{-k(k-t)}{2t}+it^{-1}k} \ket{k_L}$ for $t=1...q-1$. The state 
$\ket{i'(t)_L}_{V{\setminus} \{d\}}$ is equivalent to the CQ encoding of $i$ on graph $G*^td$ \cite{Keet10}.

We now look at how the conditions for access arrived at in section \ref{Section: access structure} can be used, along with the stabiliser (or ``fixed point'') condition (\ref{Eqn: stab}), to eventually see how authorised sets can access the information in the CQ and QQ protocols.
We start with the QQ case, which is enough to imply the CQ case (see \cite{MM12}). Suppose a set of players $B\subset V{\setminus} \{d\}$ has access to quantum information in a graph $G=(V, \Gamma)$. We proved with Theorem \ref{thm:q_acc} that $B$ can access QQ encoded quantum information in $G$ if and only if $B$ can access the CQ encoded classical information in $G$ and $V{\setminus} (B{\cup} \{d\})$ cannot. Thus, by rewriting lemma \ref{lem:suff_acc} and \ref{lem:suff_no} applied to $B$ and $V{\setminus} (B{\cup} \{d\})$, we have: there exists $D : B\rightarrow \F_q$ and $C: B{\cup} \{d\}\rightarrow \F_q$ such that $C(d)=1$ and
\begin{IEEEeqnarray}{rCl}
&&\supp (\Gamma[B,V{\setminus} B].D)=\{d\}\label{suff_acc}\\
&&\Gamma[B{\cup} \{d\},V{\setminus} (B{\cup} \{d\})].C=0
\end{IEEEeqnarray}

Now, call $K_i=X_iZ_{\Gamma.\{i\}}$ and $k_i=X_iZ_{\Gamma[V{\setminus} \{d\},V{\setminus} \{d\}]\{i\}}$ (these are the fixpoint operators, or stabilisers for graphs $G$ and $G{\setminus} d$ respectively according to (\ref{Eqn: stab})).\\
First we have $K_C=K_d\prod_{i\in B}K^{C(i)}_i=X_dZ^{\beta}_d. Z_{\Gamma.\{d\}}\prod_{i\in B}k^{C(i)}_i$ with $\beta=\Gamma.C(d)$.
Then $Z_{\Gamma.\{d\}}\prod_{i\in B}k^{C(i)}_i=\omega^{\lambda}\prod_{i\in B}X^{C(i)}_iZ^{\Gamma.C(i)}_i$, with
$\lambda=\sum_{i,j\in B{\cup} \{d\}, j<i}\Gamma(j,i)C(j)C(i)$. \\
Next $K_D$ satisfies $K_D=\prod_{i\in B}K^{D(i)}_i=Z^{\alpha}_d\prod_{i\in B}k^{D(i)}_i$, with $\alpha=\Gamma.D(d)\neq 0$ since (\ref{suff_acc}), and $\prod_{i\in B}k^{D(i)}_i=\omega^{\lambda'}\prod_{i\in B}X^{D(i)}_iZ^{\Gamma.D(i)}_i$, $\lambda'=\sum_{i,j\in B, j<i}\Gamma(j,i)D(j)D(i)$.\\
Later we will suppose $\alpha=1$ (change $D$ to $\alpha^{-1}.D$ if necessary). \\
Hence $\begin{array}[t]{l}{K_C}^t{K_D}^{1-t\beta}\ket G\\=\omega^{\frac{t(t-1)}{2}\beta}X^t_dZ_d.{[Z_{\Gamma.\{d\}}\prod_{i\in B}k^{C(i)}_i]}^t{[\prod_{i\in B}k^{D(i)}_i]}^{1-t\beta} \ket G\\=\ket G\end{array}$ \\
which is a stabiliser / fixpoint equation involving operators only in $d$ and $B$ which will be used to inform which measurements should be made to recover the secret in the CQ case, and how to find the QQ decoding operation.
We can rewrite this as follows \\
${[ Z_{\Gamma.\{d\}}\prod_{i\in B}k^{C(i)}_i]}^t{[\prod_{i\in B}k^{D(i)}_i]}^{1-t\beta}=\omega^c\prod_{i\in B}X^{x_i}_iZ^{z_i}_i$
with
\begin{align}
x_i(t)&=tC(i)+(1-t\beta) D(i) \label{Eqn: x}\\
z_i(t)&=t \Gamma.C(i)  +(1-t\beta)\Gamma.D(i),\label{Eqn: z}\\
c=t\lambda'&+(1-t\beta)\lambda+t(t-1)\lambda'+(1-t\beta)(-t\beta)\lambda+t(1-t\beta)\sum_{i,j\in B}\Gamma(i,j)C(i)D(j)
\end{align}
and we further define
\begin{align} \label{Eqn: f}
f_t(r):=-r-c-\frac{t(t-1)}{2}\beta.
\end{align}
We can then see that given the state $|G\rangle_V$, if the dealer measures $X^tZ$, getting result $\omega^{s(t)}$ and each player $i$ in $B$ measures its qudit in the $X^{x_i(t)}Z^{z_i(t)}$ bases, denoting their results $m_i(t)$, if we define $m(t)=f^{-1}_t(\sum_i m_i(t))$, then the fixpoint stabiliser conditions imply $m(t)=s(t)$. This will be the basis of the CQ accessing strategy.

For the QQ accessing, we define operators $U_B$ and $V_B$ only acting on $B$ such that
$U_B:=\prod_{i\in B}k^{-D(i)\alpha^{-1}}_i$, which satisfies $U_B\ket{s_L}=\omega^{s}\ket{s_L}$
and $V_B:=Z_{\Gamma.\{d\}}\prod_{i\in B}k^{C(i)-\beta\alpha^{-1}D(i)}_i$, which satisfies $V_B\ket{s_L}=\ket{(s+1)_L}$.\\
We also define the extended Bell basis as the following bipartite states
over a system $\{a,b\}$: $\forall k,l\in F_q$,
$\ket{\beta_{k,l}}_{ab}=Z_a^kX_b^l\sum_{i\in\F_q}\frac{\ket{ii}_{ab}}{\sqrt{q}}$.
The result $(k,l)$ of a measurement over $\{a,b\}$ in the Bell basis
yield the state as $\ket{\beta_{k,l}}_{ab}$.

\begin{description}
\item
\textbf{CQ Protocol:}\ \\
\begin{enumerate}
\item The dealer prepares the graph state
\begin{equation*}
\ket{G}=\sum_{i=0}^{q-1}\frac 1{\sqrt{q}}\ket{i(t)\,}_d\ket{i(t)_L}_{V{\setminus} \{d\}}
\end{equation*}
and sends one qudit of the state to each player.
\item The dealer randomly measures its qudit among the bases: $\{X^tZ\}_{t\in T}$ and denotes the result $\omega^{s(t)}$. That leaves the state over the players on $\ket{i(t)_L}_{V{\setminus} \{d\}}$.
\item A player $b\in B$  randomly chooses $t'\in T$ and send $t'$ to the other players in $B$ using their private channel.
\item Each player $i$ in $B$ measures its qudit in the $X^{x_i(t')}Z^{z_i(t')}$ bases (see (\ref{Eqn: x}),(\ref{Eqn: z})) and sends the result $\omega^{m_i(t')}\in \{1,\omega, .., \omega^{q-1}\}$ to $b$.
\item $b$ computes $m(t')=f_{t'}^{-1}(\sum_im_{i}(t'))$ (see (\ref{Eqn: f})).
\item Repeat step $1.$ $2.$ $3.$ $p\rightarrow\infty$ times. The list of measurement results $s(t)$ and $m(t')$ are the raw keys of the dealer and players $B$ respectively.
\item \textsc{security test}: Follow standard QKD security steps. Through public discussion between $d$ and $B$ first sift the key followed by standard error correction and privacy amplification to generate a secure key (see \cite{MM12} and \cite{SS10}).
\end{enumerate}
\end{description}

Correctness : After the QKD security steps the dealer and the authorised set $B$ will be able to share a perfectly secure random key. Furthermore, QQ unauthorised sets for the same graph will not be able to establish such a key (see \cite{MM12} for proofs).\\

\begin{description}
\item
\textbf{QQ Protocol:}
Let $|\zeta\rangle_{S}= \sum_{i=0}^{q-1}s_i\ket{i}_S \in \C^q$ be a quantum secret.\\
\begin{enumerate}
\item A dealer prepares the state 
\begin{equation*}\frac{1}{\sqrt{q}}\sum_{0\leq i\leq q-1} s_i Z^i_{\Gamma.\{d\}}\ket{G{\setminus} d}_{V{\setminus} \{d\}}\end{equation*}
\item The dealer sends one qudit of the resultant state to each player.
\item (measurement) The authorized set $B$ uses two ancillas qudits $\{a_1,a_2\}$ prepared in the Bell pair state $\ket{\beta_{00}}_{a_1a_2}$, and performs the following two commuting projective measurement on $a_1$, $V^{-1}_BX^{-1}_{a_1}$ and $U_BZ^{-1}_{a_1}$ on ,  with result denoted $k$ and $l$ respectively.
\item (correction) $B$ applies $Z^{k}X^{-l}$ over the second ancilla $\{a_2\}$.
\end{enumerate}
\end{description}

Correctness: $U_B$ and $V_B$ satisfy $U_B\ket{i_L}_{V{\setminus} \{d\}}=\omega^i\ket{i_L}_{V{\setminus} \{d\}}$, and $V_B\ket{i_L}_{V{\setminus} \{d\}}=\ket{(i+1)_L}_{V{\setminus} \{d\}}$ $\forall i\in\F_q$. We can rewrite the state over $V{\setminus} \{d\}{\cup} \{a_1, a_2\}$ as:\\
$\begin{array}{ll}
&\sum_{i\in \F_q }s_i\ket{i_L}_{V{\setminus} \{d\}}\sum_{j\in\F_q}\frac{\ket{jj}_{a_1 a_2}}{\sqrt{q}}.\\
=&\frac{1}{\sqrt{q}}\sum_{l\in \F_q}I_{V{\setminus} \{d\}}X_{a_1}^lX_{a_2}^l\sum_{i\in\F_q}\ket{i_Li}_{V{\setminus} \{d\}a_1}s_i\ket{i}_{a_2}\\
=&\frac{1}{\sqrt{q}}\sum_{l\in \F_q}I_{V{\setminus} \{d\}}X_{a_1}^lX_{a_2}^l\sum_{k\in\F_q}\sum_{i\in\F_q}\omega^{k.i}\frac{\ket{i_Li}_{{V{\setminus} \{d\}}a_1}}{q}\sum_j\omega^{-k.j}s_j\ket{j}_{a_2}\\
=&\frac{1}{\sqrt{q}}\sum_{l\in \F_q}I_{V{\setminus} \{d\}}X_{a_1}^lX_{a_2}^l\sum_{k\in\F_q}U^k_BI_{a_1}Z_{a_2}^{-k}\sum_{i\in\F_q}\frac{\ket{i_Li}_{{V{\setminus} \{d\}}a_1}}{q}\sum_js_{j}\ket{j}_{a_2}\\
=&\frac{1}{q}\sum_{l,k\in \F_q}U^k_BX_{a_1}^l\sum_{i\in\F_q}\frac{\ket{i_Li}_{{V{\setminus} \{d\}}a_1}}{\sqrt{q}}X_{a_2}^lZ_{a_2}^{-k}\sum_{j\in\F_q}s_j\ket{j}_{a_2})
\end{array}$\\

As $V^{-1}_BX^{-1}_{a_1}(U^k_BX_{a_1}^l\sum_{i\in\F_q}\frac{\ket{i_Li}_{{V{\setminus} \{d\}}a_1}}{\sqrt{q}})=\omega^kU^k_BX_{a_1}^l\sum_{i\in\F_q}\frac{\ket{i_Li}_{V{\setminus} \{d\}a_1}}{\sqrt{q}}$ and \\
$U_BZ^{-1}_{a_1}(U^k_BX_{a_1}^l\sum_{i\in\F_q}\frac{\ket{i_Li}_{{V{\setminus} \{d\}}a_1}}{\sqrt{q}})=\omega^lU^k_BX_{a_1}^l\sum_{i\in\F_q}\frac{\ket{i_Li}_{{V{\setminus} \{d\}}a_1}}{\sqrt{q}}$, the projective measurement according to $V^{-1}_BX^{-1}_{a_1}$ and $U_BZ^{-1}_{a_1}$ reveals the syndrome $(k,l)$, such that the correction  $Z^{k}X^{-l}$ over the ancilla $\{a_2\}$ leaves the state as $\sum_i s_i\ket{i}_{a_2}$.

\subsection{Example}
We illustrate the use of characterisation of the access structure in a multigraph with a Reed Solomon Graph State that allows a quantum secret (or equivalently a random key of $dits$) to be shared between a dealer and all subset of at least $\frac{n+1}{2}$ players among a set of $n$ players over a field of $q$ elements, with $q\geq n$. We refer to \cite{MM13}, \cite{CGL99} for more details about Reed Solomon Graph for secret sharing. \\
We saw $B\subset V{\setminus} \{d\}$ can access quantum information with respect to $d$ in $G$ iff  there exist $D : B\rightarrow \F_q$ and $C: B{\cup} \{d\}\rightarrow \F_q$ such that $C(d)=1$ and
\begin{IEEEeqnarray*}{rCl}
&&\supp (\Gamma[B,V{\setminus} B].D)=\{d\}\\
&&\Gamma[B{\cup} \{d\},V{\setminus} (B{\cup} \{d\})].C=0
\end{IEEEeqnarray*}
We rewritte these conditions in the following way: $B\subset V{\setminus} \{d\}$ can access quantum information in $G$ iff there exist $D : B\rightarrow \F_q$ and $C: B \cup \{d\}\rightarrow \F_q$ such that $C(d)=1$ and $\left\{\begin{array}{ll}&B\ \cup\ \{u\in V {\setminus} B \quad | \quad \sum_{v\in B^{}}D(v).\Gamma(u,v)\neq 0\mod q\}=B \cup \{d\}. \ \ \hfill (5)\\
&B\ \cup\ \{d\} \cup\  \{u\in V{\setminus} (B{\cup} \{d\}) \ | \ \sum_{v\in B{\cup} \{d\}}C(v).\Gamma(v,u)\neq 0\mod q\}=B \cup \{d\} \ \ \hfill (6)\end{array}\right.$

For $A: V \rightarrow \F_q$, we call $G_A=(V_A,\Gamma_A)$ the subgraph induced by $A$ such that:\\
$V_A=\supp(A) \cup \{v\in V{\setminus} \supp(A)\quad |\quad  \Gamma[\supp(A),V{\setminus} \supp(A)].A(v)\neq 0\mod q$\\
and 
$\forall v_i\in \supp(A), \left\{\begin{array}{l}\Gamma_A(v_i,v_j)=A(v_i)A(v_j).\Gamma(v_i,v_j)\text{ if } v_j\in \supp(A)\\
    \Gamma_A(v_i,v_j)=A(v_i)\Gamma(v_i,v_j)\text{ if } v_j\in V_A{\setminus} \supp(A)\end{array}\right.$ \\

For example, let $G=(V,\Gamma)$, $d\in V$, $|V|=8$, be the $(4,3,7)_7$ Reed Solomon Graph State given in Fig \ref{RS74}.

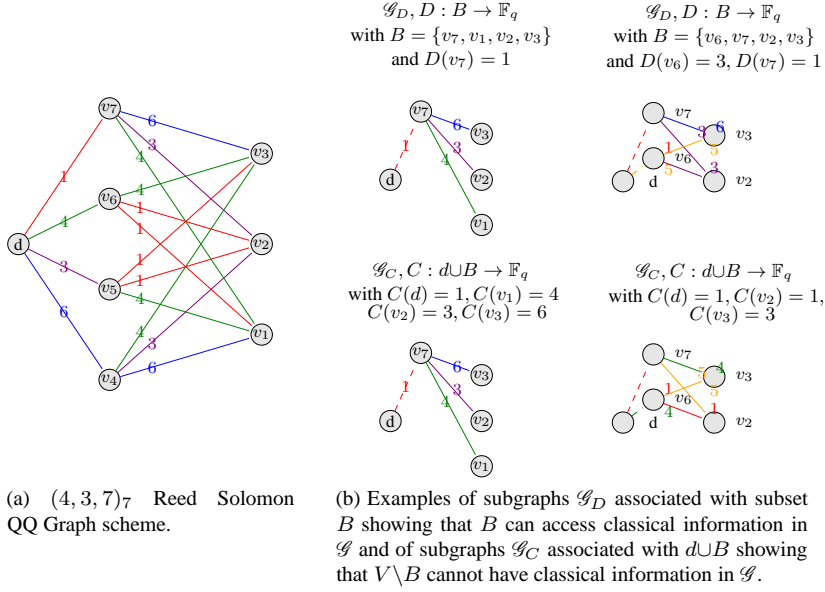
\begin{figure}[t]
\vspace{-1cm}\subfloat[1][$(4,3,7)_7$ Reed Solomon $\quad$ QQ~Graph scheme.]{%
\scalebox{0.8}{
\begin{tikzpicture}[shorten >=1, -, font=\footnotesize]
\tikzstyle{vertex}=[circle,draw,fill=black!10,minimum size=10pt,inner sep=0pt,font=\footnotesize]

\node[vertex] (d) at (0,2.25) {d};
\foreach \name/\y/\text in {P-1/0/v_4, P-2/1.5/v_5, P-3/3/v_6, P-4/4.5/v_7}
  \node[vertex,xshift=0cm,yshift=0cm] (\name) at (1.5,\y) {$\text$};

\foreach \name/\y/\text in {Q-1/0.75/v_1, Q-2/2.25/v_2, Q-3/3.75/v_3}
  \node[vertex,xshift=0cm,yshift=0cm] (\name) at (4,\y) {$\text$};

\foreach \to in {1}
           { \draw[blue] (d) -- node {$6$}(P-\to);}
\draw[purple] (d) -- node{$3$} (P-2);
\draw[green] (d) -- node{$4$} (P-3);
\draw[red] (d) -- node{$1$} (P-4);
\foreach \from/\to in {4/3}
           { \draw[blue] (P-\from) node[xshift=0.7cm,yshift=-0.2cm] {$6$} -- (Q-\to);}
\foreach \from/\to in {1/1}
           { \draw[blue] (P-\from) node[xshift=0.7cm,yshift=0.2cm] {$6$} -- (Q-\to);}
\foreach \from/\to in {4/2}
           { \draw[purple] (P-\from) node[xshift=0.7cm,yshift=-0.6cm] {$3$} -- (Q-\to);}
\foreach \from/\to in {1/2}
           { \draw[purple] (P-\from) node[xshift=0.7cm,yshift=0.6cm] {$3$} -- (Q-\to);}
\foreach \from/\to in {1/3}
           { \draw[green] (P-\from)  node[xshift=0.5cm,yshift=0.8cm] {$4$} -- (Q-\to);}
\foreach \from/\to in {4/1}
           { \draw[green] (P-\from)  node[xshift=0.5cm,yshift=-0.8cm] {$4$} -- (Q-\to);}
\foreach \from/\to in {2/1}
           { \draw[green] (P-\from)  node[xshift=0.5cm,yshift=-0.15cm] {$4$} -- (Q-\to);}
\foreach \from/\to in {3/3}
           { \draw[green] (P-\from)  node[xshift=0.5cm,yshift=0.15cm] {$4$} -- (Q-\to);}

\foreach \from/\to in {2/2}
           { \draw[red] (P-\from)  node[xshift=0.5cm,yshift=0.15cm] {$1$} -- (Q-\to);}
\foreach \from/\to in {3/2}
           { \draw[red] (P-\from)  node[xshift=0.5cm,yshift=-0.15cm] {$1$} -- (Q-\to);}

\foreach \from/\to in {2/3}
           { \draw[red] (P-\from)  node[xshift=0.5cm,yshift=0.5cm] {$1$} -- (Q-\to);}
\foreach \from/\to in {3/1}
           { \draw[red] (P-\from)  node[xshift=0.5cm,yshift=-0.5cm] {$1$} -- (Q-\to);}


\node (rien) at (0,-1.5) {$ $};
\node (rien) at (5,2.2) {$ $};
\end{tikzpicture}}
\label{RS74}}
\subfloat[1][Examples of subgraphs $\mathscr{G}_D$ associated with subset $B$ showing that $B$ can access classical information in $\mathscr{G}$ and of subgraphs $\mathscr{G}_C$ associated with $d{\cup} B$ showing that $V{\setminus} B$ cannot have classical information in $\mathscr{G}$.]{%
\scalebox{0.8}{
\begin{tikzpicture}[shorten >=1, -, font=\footnotesize]

\tikzstyle{vertex}=[circle,draw,fill=black!10,minimum size=10pt,inner sep=0pt,font=\footnotesize]
\tikzstyle{texte}=[font=\small]
\node[texte]  at (1.5,3.1) {$\text{and }D(v_7)=1$};
\node[texte]  at (1.5,3.5) {$\text{with }B=\{v_7,v_1,v_2,v_3\}$};
\node[texte]  at (1.5,3.9) {$\mathscr{G}_D, D:B\rightarrow \F_q$};

\node[vertex] (d) at (0.5,1.12) {d};
\foreach \name/\y/\text in { P-4/2.25/v_7}
  \node[vertex,xshift=0cm,yshift=0cm] (\name) at (1,\y) {$\text$};

\foreach \name/\y/\text in {Q-1/0.37/v_1, Q-2/1.12/v_2, Q-3/1.88/v_3}
  \node[vertex,xshift=0cm,yshift=0cm] (\name) at (2,\y) {$\text$};

\draw[red, dashed] (d) -- node{$1$} (P-4);
\foreach \from/\to in {4/3}
           { \draw[blue] (P-\from) node[xshift=0.6cm,yshift=-0.22cm] {$6$} -- (Q-\to);}
\foreach \from/\to in {4/2}
           { \draw[purple] (P-\from) node[xshift=0.6cm,yshift=-0.6cm] {$3$} -- (Q-\to);}
\foreach \from/\to in {4/1}
           { \draw[green] (P-\from)  node[xshift=0.4cm,yshift=-0.8cm] {$4$} -- (Q-\to);}

\node[texte]  at (1.5,-0.4) {$\mathscr{G}_C, C:d{\cup} B\rightarrow \F_q$};
\node[texte] at (1.5,-0.8) {$\text{ with }C(d)=1,C(v_1)=4$} ;
\node[texte] at (1.6,-1.1) {$C(v_2)=3,C(v_3)=6$} ;

\node[vertex] (d2) at (0.5,-2.88) {d};
\foreach \name/\y/\text in { P-42/-1.75/v_7}
  \node[vertex,xshift=0cm,yshift=0cm] (\name) at (1,\y) {$\text$};

\foreach \name/\y/\text in {Q-12/-3.63/v_1, Q-22/-2.88/v_2, Q-32/-2.12/v_3}
  \node[vertex,xshift=0cm,yshift=0cm] (\name) at (2,\y) {$\text$};

\draw[red, dashed] (d2) -- node{$1$} (P-42);
\foreach \from/\to in {42/32}
           { \draw[blue] (P-\from) node[xshift=0.6cm,yshift=-0.22cm] {$6$} -- (Q-\to);}
\foreach \from/\to in {42/22}
           { \draw[purple] (P-\from) node[xshift=0.6cm,yshift=-0.6cm] {$3$} -- (Q-\to);}
\foreach \from/\to in {42/12}
           { \draw[green] (P-\from)  node[xshift=0.4cm,yshift=-0.8cm] {$4$} -- (Q-\to);}

\end{tikzpicture}
\begin{tikzpicture}[shorten >=1, -, font=\footnotesize]
\hspace{0.5cm}
\tikzstyle{vertex}=[circle,draw,fill=black!10,minimum size=10pt,inner sep=0pt,font=\footnotesize]
\tikzstyle{texte}=[font=\small]
\node[texte] at (1.5,3.1) {$\text{and } D(v_6)=3, D(v_7)=1$} ;
\node[texte] at (1.5,3.5) {$\text{with }B=\{v_6,v_7,v_2,v_3\}$};
\node[texte]  at (1.5,3.9) {$\mathscr{G}_D, D:B\rightarrow \F_q$};

\node[texte] at (2,0.37) {$ $};

\node[vertex] (d) at (0.5,1.12) {d};
\foreach \name/\y/\text in {P-3/1.5/v_6, P-4/2.25/v_7}
  \node[vertex,xshift=0cm,yshift=0cm] (\name) at (1,\y) {$\text$};

\foreach \name/\y/\text in {Q-2/1.12/v_2, Q-3/1.88/v_3}
  \node[vertex,xshift=0cm,yshift=0cm] (\name) at (2,\y) {$\text$};

\draw[orange, dashed] (d) -- node{$5$} (P-3);
\draw[red, dashed] (d) -- node{$1$} (P-4);
\foreach \from/\to in {4/3}
           { \draw[blue] (P-\from) node[xshift=0.6cm,yshift=-0.22cm] {$6$} -- (Q-\to);}
\foreach \from/\to in {4/2}
           { \draw[purple] (P-\from) node[xshift=0.3cm,yshift=-0.3cm] {$3$} -- (Q-\to);}
\foreach \from/\to in {3/3}
           { \draw[orange] (P-\from)  node[xshift=0.5cm,yshift=0.15cm] {$5$} -- (Q-\to);}
\foreach \from/\to in {3/2}
           { \draw[purple] (P-\from)  node[xshift=0.5cm,yshift=-0.15cm] {$3$} -- (Q-\to);}

\node[texte]  at (1.5,-0.4) {$\mathscr{G}_C, C:d{\cup} B\rightarrow \F_q$};
\node[texte] at (1.5,-0.8) {$\text{ with }C(d)=1,C(v_2)=1,$} ;
\node[texte] at (1.8,-1.1) {$ C(v_3)=3$} ;

\node[vertex] (d2) at (0.5,-2.88) {d};
\foreach \name/\y/\text in {P-32/-2.5/v_6, P-42/-1.75/v_7}
  \node[vertex,xshift=0cm,yshift=0cm] (\name) at (1,\y) {$\text$};

\foreach \name/\y/\text in {Q-22/-2.88/v_2, Q-33/-2.12/v_3}
  \node[vertex,xshift=0cm,yshift=0cm] (\name) at (2,\y) {$\text$};

\draw[green, dashed] (d2) -- node{$4$} (P-32);
\draw[red, dashed] (d2) -- node{$1$} (P-42);
\foreach \from/\to in {42/32}
           { \draw[green] (P-\from) node[xshift=0.6cm,yshift=-0.22cm] {$4$} -- (Q-\to);}
\foreach \from/\to in {42/22}
           { \draw[orange] (P-\from) node[xshift=0.3cm,yshift=-0.3cm] {$5$} -- (Q-\to);}
\foreach \from/\to in {32/32}
           { \draw[orange] (P-\from)  node[xshift=0.5cm,yshift=0.15cm] {$5$} -- (Q-\to);}
\foreach \from/\to in {32/22}
           { \draw[red] (P-\from)  node[xshift=0.5cm,yshift=-0.15cm] {$1$} -- (Q-\to);}

\node[texte] at (2,-3.67) {$ $};
\end{tikzpicture}}
\label{SBp}}
\caption{Checking quantum accessibility in a $(4,3,7)_7$ Reed Solomon Graph.}
\end{figure}

\begin{table}[t]
\center\scalebox{0.8}{\footnotesize{
\begin{tabular}{|c|c|c|c|c|c|}
\hline
$B$ & $(D(b))_{b\in Bs}$ & $(C(b))_{b\in d{\cup} B}$ & $B$ & $(D(b))_{b\in Bs}$ & $(C(b))_{b\in d{\cup} B}$ \\
\hline
$\{v_7,v_1,v_2,v_3\}$ & $(1,0,0,0)$ & $(1,0,6,0,0)$ &
$\{v_6,v_1,v_2,v_3\}$ & $(1,0,0,0)$ & $(1,0,2,2,1)$\\
$\{v_5,v_1,v_2,v_3\}$ & $(1,0,0,0)$ & $(1,0,3,4,1)$ &
$\{v_4,v_1,v_2,v_3\}$ & $(1,0,0,0)$ & $(1,0,4,6,2)$\\
$\{v_6,v_7,v_2,v_3\}$ & $(3,1,0,0)$ & $(1,0,0,1,3)$ & 
$\{v_6,v_7,v_1,v_2\}$ & $(1,4,0,0)$ & $(1,0,0,3,6)$\\ 
$\{v_6,v_7,v_1,v_3\}$ & $(4,1,0,0)$ & $(1,0,0,5,5)$ & 
$\{v_5,v_7,v_2,v_3\}$ & $(3,4,0,0)$ & $(1,0,0,6,1)$\\ 
$\{v_5,v_7,v_1,v_2\}$ & $(1,1,0,0)$ & $(1,0,0,2,1)$ & 
$\{v_5,v_7,v_1,v_3\}$ & $(4,3,0,0)$ & $(1,0,0,1,4)$\\ 
$\{v_4,v_7,v_2,v_3\}$ & $(4,1,0,0)$ & $(1,0,0,2,2)$ & 
$\{v_4,v_7,v_1,v_3\}$ & $(3,4,0,0)$ & $(1,0,0,6,1)$\\ 
$\{v_5,v_6,v_1,v_2\}$ & $(3,1,0,0)$ & $(1,0,0,1,3)$ &
$\{v_5,v_6,v_1,v_3\}$ & $(1,6,0,0)$ & $(1,0,0,4,3)$\\ 
$\{v_5,v_6,v_7,v_3\}$ & $(2,2,1,0)$ & $(1,0,0,0,2)$ &
$\{v_5,v_6,v_7,v_2\}$ & $(4,1,1,0)$ & $(1,0,0,0,5)$\\ 
$\{v_5,v_6,v_7,v_1\}$ & $(5,6,1,0)$ & $(1,0,0,0,6)$ &
$\{v_4,v_5,v_7,v_3\}$ & $(1,1,1,0)$ & $(1,0,0,0,6)$\\ 
$\{v_4,v_5,v_7,v_1\}$ & $(4,6,1,0)$ & $(1,0,0,0,3)$ & 
$\{v_4,v_5,v_7,v_2\}$ & $(6,1,4,0)$ & $(1,0,0,0,3)$\\ 
$\{v_4,v_5,v_6,v_7\}$ & $(5,6,1,2)$ & $(1,0,0,0,0)$ & & & \\
\hline
\end{tabular}
}}
\caption{List of typical subsets $B$ of $4$ players in the Reed Solomon Graph State described in Fig \ref{RS74}. For each $B$, $B {\cup} \{ u\in V {\setminus} B \ |\ \sum_{v\in B}D(v).\Gamma(u,v)\neq 0\mod q\}=B{\cup} \{d\}=B{\cup} \{d\} {\cup} \{u\in V{\setminus} (B{\cup} \{d\}) \ |\ \sum_{v\in B{\cup} \{d\}}C(v).\Gamma(v,u)\neq 0\mod q\}$, meaning that $B$ can access quantum information, whereas $V{\setminus} (B{\cup} \{d\})$, that is all subset of $3$ players, cannot. (The remaining subsets are covered by symmetry.)}
\label{SBt}
\end{table}

Such a graph can be used by dealer $d$ to encode any quantum secret $\ket{\xi}\in \C^7$ and share it between $7$ players such that all subset of at least $4$ players can recover the secret, whereas any subset of $3$ players or less cannot have any information about it. We can reprove this result using the previous graph characterisation, that is by checking if conditions (5) (6) are satisfied in a basis $G$ for all subset $B\subset V{\setminus} \{d\}$ of $4$ players. In fig \ref{SBp}, we give the relevant induced subgraphs for thee different subset $B$. And in table \ref{SBt} we give a list of relevant multi subset $D : B \rightarrow \F_q$ and $C : B{\cup} \{d\} \rightarrow \F_q$  for typical subsets $B$ of four players.

\section{Existence of $((k,n))_q$ scheme}

In this section, we focus on the properties of the secret sharing scheme realised by a given $\F_q$-graph, as well as the existence of $\F_q$-graphs realising a given secret sharing protocol. A $\F_q$-graph $G$ of order $n$ with a particular dealer $d$ is said to realise a 
$((k,n))_q$ scheme if $k-1 = \max_{B\subseteq V{\setminus} \{d\}} (\partial_d\cutrk_G(B)\ge 0)$. In other words,  $G$ realises a $((k,n))_q$ scheme if all sets of at least $k$ players can recover a quantum secret and there exists a set of $k-1$ players which cannot.  
 A $\F_q$-graph  which realises an $((k,n))_q$ scheme can be used as an $(k,k'\geq n-k,n)_q$ CQ protocol or $(k,n-k,n)_q$ QQ protocol as described in section \ref{Section: secret sharing} (note that they can also be used for $(k,k'\geq n-k,n)_q$ schemes to share a quantum secret using hybrid protocols (e.g. \cite{B09,JMP11,FG11,Gheo12})). \\


\subsection{Finding new schemes}\label{sec:problem}
Theorem \ref{thm:q_acc} offers a combinatorial characterisation of quantum accessibility, and raises as a consequence several questions about the complexity of deciding: $(i)$ whether a given set of players can access a quantum secret in a given $q$-multigraph? $(ii)$ whether a given $q$-multigraph realises a $((k,n))$ protocol? $(iii)$ whether, given $q$, $n$ and $k$,  there exists an $\F_q$-graphs realising a $((k,n))$ protocol?

\begin{description}
\item[Problem $i$] Given an $\F_q$-graph $G$ of order $n$ with a particular dealer  $d$ and a set $B$ of $k$ players, deciding whether $B$ can access a quantum secret consists of deciding whether $\partial _d\cutrk_G(B)=-1$. This  can be decided efficiently since  $\partial _d\cutrk_G(B)$ is computed in $O(nk^{1.38})$ operations using the Gaussian elimination for computing the rank \cite{BunchHop,IbarraMH}. 

\item[Problem $ii$] Given a $\F_q$-graph $G$ of order $n$ and $\alpha \in[0,1]$, deciding whether  $G$ is a $((\alpha n,n))$ scheme can be done by enumerating all the ${\alpha n\choose n}$ sets of players of size $\alpha n$ and for each of them deciding whether they can access a quantum secret. It leads to $O(n^{2.38}2^{nH_2(\alpha)})$ operations. This problem is NP-complete, as it has been shown to be NP complete when $q=2$ \cite{CattaneoPerdrix}, and also hard in terms of  parameterised complexity as it is hard for  $W$[1] \cite{CattaneoPerdrix}. 

\item[Problem $iii$] Given $n,\alpha$, and $q$, deciding whether there exists a $((\alpha n,n))$ $\F_q$-graph? A brute-force approach consists in enumerating all the $q^{\frac{n(n-1)}2}$ $\F_q$-graphs of order $n$ and then decide whether they realise a $((\alpha n,n))$ protocol. It leads to $O(q^\frac{n(n-1)}{2}n^{2.38}2^{nH_2(\alpha)})$ operations. This can be implemented for small values of $n$ only and permits to prove that there is no  $(4,3,7)_3$ QQ secret sharing with qutrit graph state.

\end{description}
Solving problem $i$ can be done with the similar algortihm C of \cite{GLG09}. Note that for one thing, the later is more general and can be applied to input states (that is quantum secrets) and to multigraphs of arbitrary dimension (not necessarily prime number). For another thing, it concerns rather the access to partial information. Also it is not optimised for problem $i$ of our particular interest.

In the following sections, we develop a different approach for deciding the existence of $((\alpha n,n))$ $\F_q$-graphs realising.  We show an upper and a lower bound on the minimal $\alpha$ such that there exists an $\F_q$-graph realising a $((\alpha n, n))$ protocol. The upper bound (Theorem \ref{thm:exist}) is based on non constructive probabilistic methods, whereas the the lower bound (Theorem \ref{thm:lower}) is based on a counting argument.

\subsection{Existence of $q$-multigraphs realising $((\alpha n,n))_q$ schemes}

In this section, we prove a Gilbert-Varshamov-like result: for any $\alpha$ such that $H_{q^2}(1-\alpha)<\frac 12$ there exists a $q$-multigraph realising a $((\alpha n,n))_q$ scheme. The proof is using probabilistic methods and is, as a consequence, non constructive. However, we prove that a random $q$-multigraph satisfies such $((\alpha n,n))_q$ scheme with high probability as long as $H_{q^2}(1-\alpha)<\frac 12$.

\begin{lemma}
For any $q$-multigraph $G=(V,\Gamma)$ of order $n$, and any $\alpha\in [0.5,1]$, if for any multiset $ C: V\to \mathbb F_q$, $|\supp(C){\cup} \supp(\Gamma.C)|>  (1-\alpha)n$ then for any $d\in V$ and any $B\subseteq V{{\setminus}} \{d\}$ such that $|B|\ge \alpha n$, $\partial_d\cutrk_G(B) = -1$. 
\label{lemma1}
\end{lemma}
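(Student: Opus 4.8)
The plan is to read the accessibility condition $\partial_d\cutrk_G(B)=-1$ as a statement about \emph{small-support} multisets and then observe that the hypothesis is exactly a prohibition of such multisets. Fix $d\in V$ and $B\subseteq V\setminus\{d\}$ with $|B|\ge \alpha n$, and set $F:=V\setminus(B\cup\{d\})$. Since $d\notin B$ we have $|F\cup\{d\}|=|F|+1\le (1-\alpha)n$, so the one consequence of the hypothesis that I will use is: \emph{there is no nonzero multiset $C:V\to\F_q$ with $\supp(C)\cup\supp(\Gamma.C)\subseteq F\cup\{d\}$}, because any such $C$ would satisfy $|\supp(C)\cup\supp(\Gamma.C)|\le |F\cup\{d\}|\le (1-\alpha)n$, contradicting the assumption.

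Next I recall the algebraic identity established inside the proof of Theorem \ref{thm:q_acc}, namely $\pi_G(B,d)-\pi_G(F,d)=\cutrk_G(B)-\cutrk_G(B\cup\{d\})=-\,\partial_d\cutrk_G(B)$ (using $\prk{G}(B,F)=\prk{G}(F,B)$ and $\cutrk_G(B\cup\{d\})=\cutrk_G(F)$). Hence $\partial_d\cutrk_G(B)=-1$ is equivalent to the two statements $\pi_G(B,d)=1$ (so $B$ can access a classical secret, by Theorem \ref{thm:class_char}) and $\pi_G(F,d)=0$ (so $F$ cannot). I will prove both by contraposition, in each case manufacturing a forbidden $C$.

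For $\pi_G(B,d)=1$: suppose instead $\pi_G(B,d)=0$, i.e.\ $\cutrk_G(B)=\prk{G}(B,F)$. Then adjoining to $\Gamma[B,F]$ the row of $d$-to-$B$ edges does not raise the rank, so that row is an $\F_q$-combination of the rows of $\Gamma[B,F]$; collecting the coefficients into $y:F\to\F_q$ and defining $C$ by $C|_F=y$, $C(d)=-1$, $C|_B=0$ forces $(\Gamma.C)(b)=\sum_{u\in F}\Gamma(u,b)\,y(u)-\Gamma(d,b)=0$ for every $b\in B$. Thus $\supp(\Gamma.C)\subseteq F\cup\{d\}$ and $\supp(C)\subseteq F\cup\{d\}$, while $C\neq 0$ since $C(d)\neq 0$ -- contradiction. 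For $\pi_G(F,d)=0$: suppose instead $F$ could access a classical secret. By Lemma \ref{lem:suff_acc} together with the converse supplied by Theorem \ref{thm:class_char}, there is $D:F\to\F_q$ with $\supp(\Gamma[F,V\setminus F].D)=\{d\}$; extending $D$ by zero to a multiset $C$ on $V$ gives $(\Gamma.C)|_B=0$ and $(\Gamma.C)(d)\neq 0$, so again $\supp(C)\cup\supp(\Gamma.C)\subseteq F\cup\{d\}$ with $C\neq 0$ -- the same contradiction. Combining the two claims yields $\partial_d\cutrk_G(B)=\pi_G(F,d)-\pi_G(B,d)=0-1=-1$.

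I expect the only delicate point to be the bookkeeping in the two constructions: verifying in each case that the auxiliary multiset $C$ is genuinely nonzero (here $C(d)\neq0$, respectively $(\Gamma.C)(d)\neq0$) and that its neighbourhood $\Gamma.C$ truly vanishes on all of $B$, which is precisely what the linear systems defining $y$ and $D$ encode, so that the pair $\supp(C),\supp(\Gamma.C)$ is confined to the small set $F\cup\{d\}$. Everything else reduces to the classical rank characterisation of Section \ref{Section: access structure}, and the inequality $\alpha\ge 0.5$ is used only implicitly through $|F\cup\{d\}|\le(1-\alpha)n$ being an admissible threshold.
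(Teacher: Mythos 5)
Your proof is correct, but it takes a genuinely different route from the paper's. The paper argues in one shot: the hypothesis forces $\ker(\Gamma[V{\setminus}B,B])=\{0\}$ for \emph{every} $B$ with $|B|\ge\alpha n$ (any nonzero kernel element would have $\supp(C)\cup\supp(\Gamma.C)\subseteq V{\setminus}B$, of size at most $(1-\alpha)n$), hence $\cutrk_G(B)=n-|B|$ for every such $B$; applying this to both $B$ and $B\cup\{d\}$ gives $\partial_d\cutrk_G(B)=(n-|B|-1)-(n-|B|)=-1$ immediately. You instead decompose $\partial_d\cutrk_G(B)=-1$ into the pair of conditions $\pi_G(B,d)=1$ and $\pi_G(F,d)=0$ via the identity from the proof of Theorem \ref{thm:q_acc}, and kill each failure mode by exhibiting a forbidden small-support multiset. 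Note that both of your constructions are, at bottom, producing nonzero elements of $\ker(\Gamma[V{\setminus}B,B])$ (one supported on $F\cup\{d\}$ with $C(d)\neq 0$, one supported on $F$), so the same mechanism drives both arguments; what your route buys is the explicit operational reading --- $B$ can access the classical secret and its complement cannot --- while the paper's route is shorter, needs no case analysis, and additionally shows that under the hypothesis every cut $(B,V{\setminus}B)$ with $|B|\ge\alpha n$ has \emph{maximal} cut-rank $n-|B|$, i.e.\ maximal entanglement across all such bipartitions. One presentational caveat: when you extract $D$ from $\pi_G(F,d)=1$, the clean justification is the linear-algebra equivalence (rank-nullity) inside the proof of Corollary \ref{cor:suff_acc}, which is an ``iff'' chain; Lemma \ref{lem:suff_acc} itself is stated only as a sufficient condition for access, so citing it ``together with the converse supplied by Theorem \ref{thm:class_char}'' is looser than what you actually use, though the needed fact is indeed in the paper and this is not a gap.
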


\begin{proof} 

For any $B\subseteq V$ such that $|B|\ge \alpha n$, $\ker (\Gamma[V{{\setminus}} B]) = \{0\}$, otherwise there would be a multiset $C$ such that $\supp(C)\subseteq V{{\setminus}} B$ and $|\supp(C){\cup} \supp(\Gamma.C)|\le (1-\alpha) n$. So for any $B\subseteq V$ such that $|B|\ge \alpha n$, $\cutrk_G(B) = n- |B|$. As a consequence,  for any $d\in V$ and any $B\subseteq V{{\setminus}} \{d\}$ such that $|B|\ge \alpha n$, $\partial_d\cutrk_G(B) =  n-|B{\cup} \{d\}| - (n-|B|) = -1$. Thus $\partial_d\cutrk_G(B) = -1$
\end{proof}

A random $\F_q$-graph $G(n,1/q)$ is a  $\F_q$-graph of order $n$ such that, for every pair of vertices $u$ and $v$,  the number of edges between $u$ and $v$ is chosen uniformly at random in $\F_q$.

\begin{theorem}\label{thm:exist} Given $q\ge 2$, and $\alpha\in [0.5,1]$ such that $H_{q^2}(1-\alpha)<\frac 12$, for any $n\in \mathbb N$, a random $q$-multigraph $G(n,1/q)$ realises a $((\alpha n,n))_q$ scheme with probability  $1-2^{\Omega(n)}$, where $d$ is any vertex of $G(n,1/q)$. 

 \end{theorem}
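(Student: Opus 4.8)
The plan is to deduce the theorem from Lemma~\ref{lemma1}: it suffices to show that, with probability $1-2^{-\Omega(n)}$, the random graph $G(n,1/q)=(V,\Gamma)$ satisfies the hypothesis of that lemma, i.e. that every \emph{nonzero} multiset $C:V\to\F_q$ has $|\supp(C)\cup\supp(\Gamma.C)|>(1-\alpha)n$. Indeed, once this holds, Lemma~\ref{lemma1} yields $\partial_d\cutrk_G(B)=-1$ for every $d$ and every $B$ with $|B|\ge\alpha n$, so by Theorem~\ref{thm:q_acc} every set of at least $\alpha n$ players can recover the quantum secret; the accessing family being upward-closed, the threshold $k$ satisfies $k\le\alpha n$, and a non-accessing set of size $k-1$ exists by definition of the threshold, so $G$ realises a $((k,n))_q$ scheme with $k\le\alpha n$ (and since the good event does not refer to $d$, this holds simultaneously for every choice of dealer $d$). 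The quantity $|\supp(C)\cup\supp(\Gamma.C)|$ is precisely the Hamming weight of the ``codeword'' $(C,\Gamma.C)$ read as a word of length $n$ over the alphabet $\F_q\times\F_q$ of size $q^2$, so the statement is a Gilbert--Varshamov bound: the rate-$\tfrac12$ ensemble $\{(C,\Gamma.C)\}$ has minimum distance exceeding $(1-\alpha)n$ exactly when $H_{q^2}(1-\alpha)<\tfrac12$.

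I would establish the required property by a first moment argument. Call $C\neq0$ \emph{bad} if $|\supp(C)\cup\supp(\Gamma.C)|\le(1-\alpha)n$; it is enough to show the expected number of bad $C$ is $2^{-\Omega(n)}$, since then Markov's inequality bounds the failure probability. Fix $C$ with $\supp(C)=S$, $|S|=s$. The key observation is that for each $v\notin S$ the coordinate $(\Gamma.C)(v)=\sum_{u\in S}\Gamma(u,v)C(u)$ is a nontrivial $\F_q$-linear combination of the independent uniform entries $\{\Gamma(u,v)\}_{u\in S}$, hence is itself uniform on $\F_q$, and these are mutually independent across distinct $v\notin S$ (they involve disjoint edges of the cut $(S,V\setminus S)$). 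Thus the active coordinates of $(C,\Gamma.C)$ outside $S$ number $\mathrm{Bin}(n-s,1-1/q)$, while the $s$ coordinates in $S$ are active deterministically. Grouping nonzero $C$ by total weight $w=|\supp(C)\cup\supp(\Gamma.C)|$ and using $\binom{n}{s}\binom{n-s}{w-s}=\binom{n}{w}\binom{w}{s}$, the expected number of bad $C$ equals
\begin{equation*}
\sum_{w=1}^{\lfloor(1-\alpha)n\rfloor}\sum_{s=1}^{w}\binom{n}{s}(q-1)^s\binom{n-s}{w-s}\Big(\tfrac{q-1}{q}\Big)^{w-s}\Big(\tfrac1q\Big)^{n-w}
= q^{-n}\sum_{w=1}^{\lfloor(1-\alpha)n\rfloor}\binom{n}{w}(q-1)^w\big((q+1)^w-1\big),
\end{equation*}
where the inner sum collapses through $\sum_{s}\binom{w}{s}q^{s}=(q+1)^w$.

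The crucial simplification is $(q-1)^w(q+1)^w=(q^2-1)^w$, which is exactly the factor that makes the $q^2$-ary entropy appear. Bounding the sum by the standard $q^2$-ary Hamming-ball volume estimate $\sum_{w\le(1-\alpha)n}\binom{n}{w}(q^2-1)^w\le (q^2)^{\,nH_{q^2}(1-\alpha)}$ (valid since $1-\alpha\le\tfrac12\le 1-1/q^2$) bounds the expected number of bad $C$ by $q^{-n}\,q^{\,2nH_{q^2}(1-\alpha)}=q^{\,n(2H_{q^2}(1-\alpha)-1)}$. Under the hypothesis $H_{q^2}(1-\alpha)<\tfrac12$ this exponent is a negative constant times $n$, so the expected number of bad $C$ is $2^{-\Omega(n)}$ and the theorem follows. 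The main obstacle is not the probabilistic machinery but getting the constant in the exponent sharp: a careless union bound (for instance over the $\binom{n}{t}$ candidate supports $T$, with the crude estimate $q^{t}-1\le q^{t}$ for the kernel event of the cut matrix $\Gamma[V\setminus T,T]$) produces $(q^2)^w$ instead of $(q^2-1)^w$ and only proves the strictly weaker condition $H_2(1-\alpha)<(2\alpha-1)\log_2 q$. Recovering the exact Gilbert--Varshamov threshold $H_{q^2}(1-\alpha)<\tfrac12$ requires the exact weight-enumerator computation above, in which counting each $C$ once by its true weight $w$, rather than over all supersets $T$ of its support, is precisely what yields the tight $(q^2-1)^w$.
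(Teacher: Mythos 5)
Your proposal is correct, and its skeleton is the same as the paper's: reduce to Lemma~\ref{lemma1}, then run a first-moment (union-bound) argument over all nonzero multisets $C$ of small support, using the observation that for $v\notin\supp(C)$ the coordinates $(\Gamma.C)(v)$ are i.i.d.\ uniform on $\F_q$. Where you genuinely diverge is in how the resulting double binomial sum is estimated. The paper groups the bad events by support size $|\supp(C)|=cn$, bounds each group by $2^{ng(c)}$ using entropy approximations of the binomial coefficients, and then maximises $g$ by calculus, locating the critical support fraction $c^*=\frac{q}{q+1}(1-\alpha)$ at which $g(c^*)=\log_2(q)\bigl(2H_{q^2}(1-\alpha)-1\bigr)$. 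You instead group by the total weight $w=|\supp(C)\cup\supp(\Gamma.C)|$, collapse the sum over support sizes exactly via $\binom{n}{s}\binom{n-s}{w-s}=\binom{n}{w}\binom{w}{s}$ and $\sum_{s}\binom{w}{s}q^{s}=(q+1)^w$, so that the factor $(q-1)^w(q+1)^w=(q^2-1)^w$ emerges and the expectation is bounded by the standard $q^2$-ary ball volume, giving the same exponent $q^{n(2H_{q^2}(1-\alpha)-1)}$ without any optimisation (and you correctly note the validity condition $1-\alpha\le 1-1/q^2$ for that volume bound). Your route is cleaner and makes the Gilbert--Varshamov reading transparent --- the pairs $(C,\Gamma.C)$ behave as a rate-$\frac12$ ensemble over an alphabet of size $q^2$ --- whereas the paper's optimisation, though heavier, surfaces the extremal fraction $\frac{q}{q+1}(1-\alpha)$ that reappears in Lemma~\ref{lemma2} and the lower bound of Theorem~\ref{thm:lower}. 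You also handle cleanly two points the paper leaves implicit: the restriction to nonzero $C$, and the fact that what is actually proved is that the accessibility threshold is at most $\alpha n$, simultaneously for every choice of dealer $d$.
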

\begin{proof}
Let $\mathcal C_\alpha = \{C:V\to \mathbb F_q, |\supp(C)|\le (1-\alpha)n\}$. For any $C\in \mathcal C_\alpha$, let $A_C$ be the (bad) event $|\supp(C){\cup} \supp(\Gamma.C)|\le  (1-\alpha)n$. \\For any $C\in \mathcal C_\alpha$, $Pr(A_C)=\frac1{q^{(1-c)n}} \sum_{k=0}^{(1-\alpha-c)n} {(1-c)n\choose k} (q-1)^{k}$ where $c=|\supp(C)|/n$, and $\sum_{C\in \mathcal C_\alpha} Pr(A_C) = \sum_{j=0}^{(1-\alpha)n} f(j)$ with $f(j) = \sum_{C s.t. |\supp(C)| = j} Pr(A_C)$. \\
In the following, we show an upperbound on $f(k)$. For any $c\in [0,0.5]$, $f(cn)= {n\choose cn}(q-1)^{cn}  \frac1{q^{(1-c)n}}\sum_{k=0}^{(1-\alpha-c)n}{(1-c)n\choose k}(q-1)^{k}\le \frac{(q-1)^{cn}}{q^{(1-c)n}}2^{nH_2(c)+(1-c)nH_2(\frac{1-\alpha -c}{1-c})} (q-1)^{(1-\alpha-c)n} = 2^{ng(c)}$ where $g(c) = {H_2(c)+(1-c)H_2(\frac \alpha {1-c})+(1-\alpha)\log_2(q-1) - (1-c)\log_2(q)}$. $g'(c) = -\log_2(c)+\log_2(1-\alpha-c)+\log_2(q)$, so $g'(c)=0 \iff c= \frac q{q+1}(1-\alpha)$. As a consequence, $g(c)\le g( \frac q{q+1}(1-\alpha)) = -\alpha\log_2(\alpha)-(1-\alpha)\log_2(\alpha) +(1-\alpha)\log_2(q^2-1) - \log_2(q) = \log_2(q)(2H_{q^2}(1-\alpha) -1)$. Thus, $\sum_{C\in \mathcal C_\alpha} Pr(A_C) \le (1-\alpha)nq^{n[2H_{q^2}(1-\alpha)-1]}$, so, thanks to the union bound, $Pr(\bigcap_{C\in \mathcal C_\alpha} \overline{A_C})\ge 1- (1-\alpha)nq^{n[2H_{q^2}(1-\alpha)-1]}= 1-2^{\Omega(n)}$ when $2H_{q^2}(1-\alpha)-1<0$. So according to lemma \ref{lemma1}, $\kappa_Q(G,d) \le \alpha n$ for any vertex $d$ when $H_{q^2}(1-\alpha)<\frac 12$.
\end{proof}

Theorem \ref{thm:exist} extends the upper bound of the binary case ($q=2$) \cite{JMP11}. Notice that even if a random $\F_q$-graph realises a $((\alpha n, n))_q$ scheme with probability almost $1$, double checking whether a (randomly chosen) $\F_q$-graph actually realises a $((\alpha n, n))_q$ scheme is a hard task (see Problem ($ii$) in section \ref{sec:problem}).

\subsection{Lower bound on quantum accessibility}

The no cloning theorem 
implies that for any $((\alpha n, n))$ secret sharing protocol, $\alpha > 0.5$. In the following we improve this lower bound for secret sharing schemes based on qudit graph states. The lower bound on $\alpha$ depends on the dimension $q$ (see Theorem \ref{thm:lower}), the value of  the lower bound is plotted for small values of $q$ in figure \ref{fig:lower}.
 
The lower bound is based on the properties of the \emph{kernel with respect to the dealer} defined as follows:

\begin{definition}
Given a $q$-multigraph $G$, for any $d\in V(G)$ and any $B\subseteq V(G){{\setminus}} \{d\}$, let $\mathcal S_d(B)= \ker(\Gamma_G[B{\cup} \{d\}]) {{\setminus}}  \ker (\Gamma_G[B])$ be the kernel of $B$ with respect to $d$. 
\end{definition}

\begin{lemma} Given a $q$-multigraph $G$, for any $d\in V(G)$ and any $B\subseteq V(G){{\setminus}} \{d\}$,  if $\partial_d\cutrk_G(B) = -1$, there exists $C\in \mathcal S_d(B)$ such that  $$|\supp(C)| <\frac {q}{q+1}\cutrk_G(B)$$
\label{lemma2}
\end{lemma}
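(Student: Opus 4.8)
The plan is to view $\mathcal S_d(B)$ as the set of nonzero cosets of a codimension-two subcode and then produce a low-weight representative by a first-moment (Plotkin-type) averaging. First I would record the dimensions forced by the hypothesis. Since $\partial_d\cutrk_G(B)=-1$ means $\cutrk_G(B{\cup}\{d\})=\cutrk_G(B)-1$, writing $r=\cutrk_G(B)$, $W=\ker(\Gamma[B{\cup}\{d\}])$ and $W_0=\ker(\Gamma[B])$ (the latter embedded in $\F_q^{B\cup\{d\}}$ via $C(d)=0$), one has $\dim W=|B|-r+2$ and $\dim W_0=|B|-r$; hence $W_0\subseteq W$ with codimension exactly $2$ and $\mathcal S_d(B)=W\setminus W_0$. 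Identifying $C\in W$ with $(x,t)$, $x=C|_B$, $t=C(d)$, membership in $\mathcal S_d(B)$ is precisely $(t,\gamma^\top x)\neq 0$, where $\gamma^\top=\Gamma[B,\{d\}]$ is the $d$-row of the cut; thus $C\mapsto(t,\gamma^\top x)$ realises $W/W_0\cong\F_q^2$. The one structural fact I would isolate is that $\gamma^\top$ is not in the row space of $\Gamma[B,V\setminus(B\cup\{d\})]$, which is exactly what the extra unit of cut rank, $\cutrk_G(B)=\cutrk_G(B\cup\{d\})+1$, encodes.

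The next step, and the one carrying the factor $\tfrac{q}{q+1}r$ rather than $\tfrac{q}{q+1}(|B|+1)$, is to shrink the effective length to about $r$. Let $M=\Gamma[B,V\setminus(B\cup\{d\})]$, so $\rk M=r-1$ and the $d$-column $\delta=\Gamma[\{d\},V\setminus(B\cup\{d\})]$ lies in the column space of $M$. Choosing $S_1\subseteq B$ of size $r-1$ whose columns form a basis of that column space, there is a witness $(x,t)$ with $t\neq 0$ supported on $\{d\}\cup S_1$. Because $\gamma^\top$ is outside the row space of $M$, it cannot vanish on all of $\ker M$, so some player $v_0$ makes the unique dependency of $S_1\cup\{v_0\}$ into a vector $x'\in\ker M$ with $\gamma^\top x'\neq 0$, i.e. a witness $(x',0)$ of the second type. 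Putting $T=\{d\}\cup S_1\cup\{v_0\}$ and $W_T=\{C\in W:\supp(C)\subseteq T\}$, the two witnesses show $W_T$ still surjects onto $W/W_0\cong\F_q^2$, so $W_{0,T}:=W_T\cap W_0$ has codimension $2$ in $W_T$, while $|T|\le r+1$.

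Now I would run the averaging on $W_T$. For any $\F_q$-linear code $U$ of dimension $m$, $\sum_{C\in U}|\supp(C)|=(q-1)q^{m-1}\ell$, where $\ell$ is the number of coordinates on which $U$ is not identically zero. Applying this to $U=W_T$ and to $W_{0,T}$ and subtracting gives $\sum_{C\in W_T\setminus W_{0,T}}|\supp(C)|\le(q-1)q^{\dim W_T-1}\ell$, whereas $|W_T\setminus W_{0,T}|=q^{\dim W_T-2}(q^2-1)$. Dividing, the mean support over $W_T\setminus W_{0,T}$ is at most $\tfrac{(q-1)q}{q^2-1}\ell=\tfrac{q}{q+1}\ell\le\tfrac{q}{q+1}|T|$, so some $C\in W_T\setminus W_{0,T}\subseteq\mathcal S_d(B)$ satisfies $|\supp(C)|\le\tfrac{q}{q+1}|T|$, i.e. essentially $\tfrac{q}{q+1}\cutrk_G(B)$.

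The step I expect to be the real obstacle is precisely this length control. The raw averaging applied to $W$ itself only bounds support by $\tfrac{q}{q+1}$ times the number of live coordinates of $W$, which may be as large as $|B|+1$; all the content is in exhibiting a subcode supported on $\approx r$ coordinates that still meets both generators of $W/W_0$. Tightening $|T|$ from $r+1$ to $r$ — needed to reach the clean $\tfrac{q}{q+1}r$ with a strict inequality — requires $r-1$ cut columns of rank $r-2$ whose span already contains $\delta$, which is available generically but fails in MDS-like configurations; I would therefore present the construction above for the main estimate and dispose of the borderline small-rank cases, where the strict bound is most delicate, by a direct check.
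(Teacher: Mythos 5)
Your construction is essentially the paper's own argument. The paper likewise reduces to the observation that $\ker(\Gamma[B])$ sits inside $\ker(\Gamma[B\cup\{d\}])$ with codimension two, takes a reduced column echelon (standard form) basis $M={I\choose M'}$ of the larger kernel, picks two columns $C_1,C_2$ that are independent modulo the smaller kernel (so every nonzero combination lies in $\mathcal S_d(B)$), notes that the echelon structure forces $|\supp(C_1)\cup\supp(C_2)|\le \cutrk_G(B)+1$, and then averages: each live coordinate vanishes in exactly $q-1$ of the $q^2-1$ nonzero combinations, giving some $C$ with $|\supp(C)|\le \frac{q^2-q}{q^2-1}\left(\cutrk_G(B)+1\right)=\frac q{q+1}\left(\cutrk_G(B)+1\right)$. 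Your two explicit witnesses (the dependency expressing the dealer's column $\delta$ over the column basis $S_1$, and a kernel dependency of $M$ not annihilated by $\gamma^\top$) are exactly what the echelon columns provide, and your averaging over $W_T\setminus W_{0,T}$ is the same count in different clothing.

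The obstacle you flag at the end is genuine, but you should know that the paper does not overcome it either: its proof concludes with the chain $|\supp(C)|\le \frac q{q+1}\left(\cutrk_G(B)+1\right)<\frac q{q+1}\cutrk_G(B)$, whose final inequality is false as written. So what the paper actually establishes is precisely what you establish, namely the bound with $\cutrk_G(B)+1$; the strict bound with $\cutrk_G(B)$ in the statement of Lemma~\ref{lemma2} is not proved there, and your suspicion that it would require a separate, more delicate argument (shrinking the joint support from $r+1$ to $r-1$ live coordinates beyond the pivots, which can fail in MDS-like configurations) is well founded. The discrepancy is harmless downstream: in the proof of Theorem~\ref{thm:lower} the lemma is applied with $\cutrk_G(B_0)\le(1-\alpha)n$ and only the asymptotic entropy inequality matters, so the weaker bound $\le\frac q{q+1}\left(\cutrk_G(B)+1\right)$ suffices up to lower-order corrections in the finite-$n$ counting. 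In short: your proof is correct for what it claims, it coincides in substance with the paper's, and the gap you isolated is a flaw in the paper's stated constant rather than a missing idea on your part.
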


\begin{proof} 
Since $\cutrk_G(B{\cup} \{d\})-\cutrk_G(B) = -1$, $\dim(\ker(\Gamma_G[B{\cup} \{d\}])) - \dim( \ker (\Gamma_G[B]) )= 2$. Moreover, $\ker (\Gamma_G[B]) \subseteq \ker (\Gamma_G[B{\cup} \{d\}])$, so $|\mathcal S_d(B)|=(q^2-1).q^t$ where $t = \dim(  \ker($\\$\Gamma_G[B]))$. Let $M = {I\choose M'}$ a matrix in standard form (or reduced column echelon form) generating $\Gamma_G[B{\cup} \{d\}]$.  Since $|\mathcal S_d(B)|=(q^2-1).q^t$ and $|\ker(\Gamma_G[B{\cup} \{d\}])|= q^{t+2}$, there exist two columns $C_1$ and $C_2$ of $M$ such that  $\forall (x,y)\in [0,q-1]^2{{\setminus}} \{(0,0)\}$, $x.C_1+y.C_2\in \mathcal S_d(B)$.  Notice that since $M$ is in standard form, $|\supp(C_1){\cup} \supp(C_2)| \le |B|+1-t$. Moreover for any $v\in \supp(C_1){\cup} \supp(C_2)$, $v$ has a  zero multiplicity in $q-1$ vectors of the $q^2-1$ linear combinations $x.C_1+y.C_2$ for $x,y \in [0,q-1]{{\setminus}} \{(0,0)\}$, so $\sum_{(x,y)\in [0,q-1]^2{{\setminus}} \{(0,0)\}} |\supp(x.C_1+y.C_2)| = (q^2-1-(q-1)).|\supp(C_1){\cup} \supp(C_2)|$, so there exists $C\in \mathcal S_d(B)$ such that $|\supp (C)|\le \frac {q^2-q}{q^2-1}(|B|+1-t)= \frac q{q+1}(\cutrk_G(B)+1)<   \frac q{q+1}\cutrk_G(B)$. 
\end{proof}


\begin{theorem}\label{thm:lower}
If a  $q$-multigraph $G$ of order $n$ realises a $((\alpha n,n))_q$ scheme, then
$$ 
{n\choose \frac{(1-\alpha)qn}{q+1}}{\alpha n \choose (2\alpha-1)n}\ge \frac{(2\alpha -1)(1-\alpha)}{2}{n \choose \alpha n}$$

\noindent Asymptotically, as $n$ tends to infinity, $\alpha$  satisfies:


$$H_2(\frac {\alpha q+1}{q+1}) + \alpha H_2(\frac{1-\alpha} \alpha) \ge H_2(\alpha)$$ 
\end{theorem}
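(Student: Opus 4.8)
The plan is to run a double-counting argument over the family of all threshold sets, attaching to each one a small-support witness vector via Lemma \ref{lemma2}, and then estimating the number of incidences between sets and witnesses in two ways.

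First I would record the two quantitative consequences of the hypothesis. Since $G$ realises a $((\alpha n,n))_q$ scheme, every $B\subseteq V{\setminus}\{d\}$ with $|B|\ge\alpha n$ has $\partial_d\cutrk_G(B)=-1$; in particular this holds for all $\binom{n-1}{\alpha n}$ sets of size exactly $\alpha n$. For such a $B$ the cut matrix $\Gamma_G[B]$ has $\alpha n$ columns and $(1-\alpha)n$ rows, so $\cutrk_G(B)\le(1-\alpha)n$ and rank--nullity gives $\dim\ker(\Gamma_G[B])=\alpha n-\cutrk_G(B)\ge(2\alpha-1)n$. Simultaneously, Lemma \ref{lemma2} furnishes a witness $C_B\in\mathcal S_d(B)$ with $|\supp(C_B)|<\tfrac{q}{q+1}\cutrk_G(B)\le\tfrac{(1-\alpha)q}{q+1}n=:s$; here $d\in\supp(C_B)$ since $C_B\notin\ker(\Gamma_G[B])$, and after rescaling I may normalise $C_B(d)=1$. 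Note that in the relevant regime $\alpha\ge\tfrac12>\tfrac{q-1}{2q}$ one has $s<n/2$, so $\binom ns$ will be the dominant binomial coefficient of that support size.

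Next I would count the pairs $(B,C)$ with $|B|=\alpha n$, $B\subseteq V{\setminus}\{d\}$ and $C=C_B$. Each threshold set contributes at least one pair, so the number of pairs is at least $\binom{n-1}{\alpha n}$, which agrees with $\tfrac{(2\alpha-1)(1-\alpha)}{2}\binom{n}{\alpha n}$ up to the stated elementary prefactor. For the upper bound I would classify each pair by its support $S=\supp(C)$: there are at most $\binom ns$ (times a polynomial factor) admissible supports, and it then remains to bound, for a fixed normalised witness, the number of threshold sets it can certify. Multiplying, $\binom ns\binom{\alpha n}{(2\alpha-1)n}$ would dominate the incidence count, yielding the finite inequality $\binom{n}{\frac{(1-\alpha)qn}{q+1}}\binom{\alpha n}{(2\alpha-1)n}\ge\tfrac{(2\alpha-1)(1-\alpha)}{2}\binom{n}{\alpha n}$.

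The heart of the matter, and the step I expect to be the \textbf{main obstacle}, is the per-witness bound $\binom{\alpha n}{(2\alpha-1)n}$. The mechanism is that $C\in\ker(\Gamma_G[B{\cup}\{d\}])$ forces $\supp(\Gamma.C)\subseteq B{\cup}\{d\}$, so together with $\supp(C){\setminus}\{d\}\subseteq B$ the set $B$ must contain the fixed vertices determined by $C$, while the residual freedom in choosing $B$ is governed by the large kernel $\ker(\Gamma_G[B])$ of dimension $\ge(2\alpha-1)n$ inside the $\alpha n$ coordinates of $B$. Turning this heuristic into the precise count $\binom{\alpha n}{(2\alpha-1)n}$, and at the same time verifying that the choices of nonzero field values of $C$ (the potential $(q-1)^{|S|}$ factor) are absorbed by the normalisation $C(d)=1$ and the neighbourhood constraint rather than multiplying the support count, is the delicate point on which the whole inequality rests. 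Finally, for the asymptotic form I would take $\tfrac1n\log_2$ of both sides and apply Stirling in the form $\tfrac1n\log_2\binom{n}{\beta n}\to H_2(\beta)$, using $1-\tfrac{(1-\alpha)q}{q+1}=\tfrac{\alpha q+1}{q+1}$ and $\tfrac{2\alpha-1}{\alpha}=1-\tfrac{1-\alpha}{\alpha}$ so that $\tfrac1n\log_2\binom ns\to H_2\!\big(\tfrac{\alpha q+1}{q+1}\big)$ and $\tfrac1n\log_2\binom{\alpha n}{(2\alpha-1)n}\to\alpha H_2\!\big(\tfrac{1-\alpha}{\alpha}\big)$; the prefactor $\tfrac{(2\alpha-1)(1-\alpha)}{2}$ contributes $0$ in the limit, producing the displayed entropy inequality.
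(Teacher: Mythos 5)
Your skeleton coincides with the paper's: attach to each of the $\binom{n-1}{\alpha n}$ authorized sets $B$ of size $\alpha n$ a small-support witness $C_B\in\mathcal S_d(B)$ from Lemma~\ref{lemma2}, count incidences two ways, and finish with Stirling (your asymptotic step and the identities $1-\tfrac{(1-\alpha)q}{q+1}=\tfrac{\alpha q+1}{q+1}$, $\tfrac1n\log_2\binom{\alpha n}{(2\alpha-1)n}\to\alpha H_2(\tfrac{1-\alpha}{\alpha})$ are fine). However, the step you explicitly defer --- that a fixed witness certifies at most $\binom{\alpha n}{(2\alpha-1)n}$ sets $B$ --- is the entire content of the theorem, and the mechanism you sketch for it is not the one that works. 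The bound $\dim\ker(\Gamma_G[B])\ge(2\alpha-1)n$ is true but plays no role: $(2\alpha-1)n$ enters the count as $\alpha n-(1-\alpha)n$, not as a kernel dimension. What the definition of $\mathcal S_d(B)$ gives you directly is only that $B\supseteq E\setminus\{d\}$, where $E=\supp(C)\cup\supp(\Gamma_G.C)$, hence a fiber bound $\binom{n-1-|E\setminus\{d\}|}{\alpha n-|E\setminus\{d\}|}$. If $|E|$ could be as small as $|\supp(C)|\le\frac{q}{q+1}(1-\alpha)n$, this bound is exponentially larger than $\binom{\alpha n}{(2\alpha-1)n}$ (compare at $\alpha$ near $\tfrac12$, where $\binom{\alpha n}{(2\alpha-1)n}$ is subexponential), and the stated inequality does not follow.

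The missing idea is to invoke the secret-sharing hypothesis a \emph{second} time, on the witness itself, to prove $|E|\ge(1-\alpha)n$. Note first that $d\in E$ in all cases: either $C(d)\neq 0$, or $C(d)=0$ and then $C\notin\ker(\Gamma_G[B])$ forces $(\Gamma_G.C)(d)\neq 0$. In the first case $C$ is a witness in the sense of Lemma~\ref{lem:suff_no} that the player set $V\setminus E$ has no information about the classical secret; in the second case $C$ is a witness in the sense of Lemma~\ref{lem:suff_acc} that $E\setminus\{d\}$ can access the classical secret, so $V\setminus E$ cannot access the quantum secret. Either way $V\setminus E$ is not an authorized set, and since in a $((\alpha n,n))_q$ scheme every set of at least $\alpha n$ players is authorized, $|V\setminus E|<\alpha n$, i.e.\ $|E|\ge(1-\alpha)n$ (this is the paper's remark that $\supp(C_0)\cup\supp(\Gamma_G.C_0)$ ``has some partial information''). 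Combined with $B\supseteq E\setminus\{d\}$, only $(2\alpha-1)n$ vertices of $B$ remain free, chosen among at most $\alpha n-1$ vertices, giving the fiber bound $\binom{\alpha n-1}{(2\alpha-1)n}$ and hence the finite-$n$ inequality. Separately, your side claim that $d\in\supp(C_B)$ ``since $C_B\notin\ker(\Gamma_G[B])$'' is false --- the second case above consists precisely of witnesses with $C(d)=0$ --- so the normalization $C(d)=1$ is unavailable; and even where available it removes a single factor of $q-1$, not the $(q-1)^{|\supp(C)|}$ multiplicity you were worried about. The paper's incidence count enumerates witness supports rather than vectors, and it is the per-witness fiber bound above, not any normalization, that carries the argument.
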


\begin{proof} Given $B_0$ of size $\alpha n$, according to lemma \ref{lemma2} there exists $C_0\in \mathcal S_d(B_0)$ such that $|\supp(C_0)| < \frac q {q+1}(1-\alpha)n$. Notice that the set $\supp(C_0){\cup}\supp(\Gamma_G.C_0)$ has some partial information about the secret so $|\supp(C_0){\cup}\supp(\Gamma_G.C_0)|\ge  (1-\alpha)n$. Moreover for any $B$ of size $\alpha n$, if $C_0\in \mathcal S_q(B)$ then $\supp(C){\cup}\supp(\Gamma_G.C)\subseteq B$. So there are at most ${n-1-(1-\alpha)n\choose\alpha n - (1-\alpha)n } ={\alpha n -1 \choose (2\alpha -1)n}$ sets $B\subseteq V{{\setminus}} \{d\}$ of size $\alpha n$ such that $C_0\in \mathcal S_d(B)$. 
For any $B$ of size $\alpha n$ there is a $C$ which support is of size at most $\frac q {q+1}(1-\alpha)n-1$, any every such $C$ is associated with at most ${\alpha n -1 \choose (2\alpha -1)n}$ such $B$s, so a counting argument implies ${n-1\choose \alpha n}\le{{\alpha n -1 \choose (2\alpha -1)n}}\sum_{i=1}^{\frac q {q+1}(1-\alpha)n-1} {n-1\choose i}$. Moreover, $\sum_{i=1}^{\frac q {q+1}(1-\alpha)n-1} {n-1\choose i}\le \frac{1+\alpha q}{q(2\alpha -1)}{ n-1 \choose \frac {(1-\alpha)qn}{q+1}-1} = \frac{(1-\alpha)(1+\alpha q)}{(2\alpha -1)(q+1)}{ n \choose \frac {(1-\alpha)qn}{q+1}}$. So, 
$\frac{{n\choose \alpha n}}{{\alpha n \choose (2\alpha -1)n}} = \frac{\alpha}{(1-\alpha)^2}\frac{{n-1\choose \alpha n}}{{\alpha n -1 \choose (2\alpha -1)n}}\le \frac{\alpha (1+\alpha q)}{(2\alpha -1)(1-\alpha)(q+1)}$\\
${n\choose \frac{(1-\alpha)qn}{q+1}}\le \frac{2}{(2\alpha -1)(1-\alpha)}{n\choose \frac{(1-\alpha)qn}{q+1}}$.

Since $2^{n(H_2( p) +o(1))}\le {n\choose pn} \le 2^{nH_2(p)}$, asymptotically, as $n$ tends to infinity, $\alpha$ satisfies the equation $H_2(\frac {\alpha q+1}{q+1}) + \alpha H_2(\frac{1-\alpha} \alpha) \ge H_2(\alpha)$.
\end{proof}

\begin{figure}
\center\scalebox{0.7}{
\setlength{\unitlength}{0.240900pt}
\ifx\plotpoint\undefined\newsavebox{\plotpoint}\fi
\sbox{\plotpoint}{\rule[-0.200pt]{0.400pt}{0.400pt}}%
\begin{picture}(1500,900)(0,0)
\sbox{\plotpoint}{\rule[-0.200pt]{0.400pt}{0.400pt}}%
\put(211.0,131.0){\rule[-0.200pt]{4.818pt}{0.400pt}}
\put(191,131){\makebox(0,0)[r]{ 0.5}}
\put(1419.0,131.0){\rule[-0.200pt]{4.818pt}{0.400pt}}
\put(211.0,243.0){\rule[-0.200pt]{4.818pt}{0.400pt}}
\put(191,243){\makebox(0,0)[r]{ 0.501}}
\put(1419.0,243.0){\rule[-0.200pt]{4.818pt}{0.400pt}}
\put(211.0,355.0){\rule[-0.200pt]{4.818pt}{0.400pt}}
\put(191,355){\makebox(0,0)[r]{ 0.502}}
\put(1419.0,355.0){\rule[-0.200pt]{4.818pt}{0.400pt}}
\put(211.0,467.0){\rule[-0.200pt]{4.818pt}{0.400pt}}
\put(191,467){\makebox(0,0)[r]{ 0.503}}
\put(1419.0,467.0){\rule[-0.200pt]{4.818pt}{0.400pt}}
\put(211.0,579.0){\rule[-0.200pt]{4.818pt}{0.400pt}}
\put(191,579){\makebox(0,0)[r]{ 0.504}}
\put(1419.0,579.0){\rule[-0.200pt]{4.818pt}{0.400pt}}
\put(211.0,691.0){\rule[-0.200pt]{4.818pt}{0.400pt}}
\put(191,691){\makebox(0,0)[r]{ 0.505}}
\put(1419.0,691.0){\rule[-0.200pt]{4.818pt}{0.400pt}}
\put(211.0,803.0){\rule[-0.200pt]{4.818pt}{0.400pt}}
\put(191,803){\makebox(0,0)[r]{ 0.506}}
\put(1419.0,803.0){\rule[-0.200pt]{4.818pt}{0.400pt}}
\put(211.0,131.0){\rule[-0.200pt]{0.400pt}{4.818pt}}
\put(211,90){\makebox(0,0){ 0}}
\put(211.0,839.0){\rule[-0.200pt]{0.400pt}{4.818pt}}
\put(416.0,131.0){\rule[-0.200pt]{0.400pt}{4.818pt}}
\put(416,90){\makebox(0,0){ 5}}
\put(416.0,839.0){\rule[-0.200pt]{0.400pt}{4.818pt}}
\put(620.0,131.0){\rule[-0.200pt]{0.400pt}{4.818pt}}
\put(620,90){\makebox(0,0){ 10}}
\put(620.0,839.0){\rule[-0.200pt]{0.400pt}{4.818pt}}
\put(825.0,131.0){\rule[-0.200pt]{0.400pt}{4.818pt}}
\put(825,90){\makebox(0,0){ 15}}
\put(825.0,839.0){\rule[-0.200pt]{0.400pt}{4.818pt}}
\put(1030.0,131.0){\rule[-0.200pt]{0.400pt}{4.818pt}}
\put(1030,90){\makebox(0,0){ 20}}
\put(1030.0,839.0){\rule[-0.200pt]{0.400pt}{4.818pt}}
\put(1234.0,131.0){\rule[-0.200pt]{0.400pt}{4.818pt}}
\put(1234,90){\makebox(0,0){ 25}}
\put(1234.0,839.0){\rule[-0.200pt]{0.400pt}{4.818pt}}
\put(1439.0,131.0){\rule[-0.200pt]{0.400pt}{4.818pt}}
\put(1439,90){\makebox(0,0){ 30}}
\put(1439.0,839.0){\rule[-0.200pt]{0.400pt}{4.818pt}}
\put(211.0,131.0){\rule[-0.200pt]{0.400pt}{175.375pt}}
\put(211.0,131.0){\rule[-0.200pt]{295.825pt}{0.400pt}}
\put(1439.0,131.0){\rule[-0.200pt]{0.400pt}{175.375pt}}
\put(211.0,859.0){\rule[-0.200pt]{295.825pt}{0.400pt}}
\put(30,495){\makebox(0,0){$\alpha$}}
\put(825,29){\makebox(0,0){$q$}}
\put(1279,819){\makebox(0,0)[r]{minimum value of $\alpha$ s.t. $H_2(\frac{\alpha q+1}{q+1}) + \alpha H_2(\frac{1-\alpha}{\alpha}) \ge H_2(\alpha)$}}
\put(1299.0,819.0){\rule[-0.200pt]{24.090pt}{0.400pt}}
\put(293,844){\usebox{\plotpoint}}
\multiput(293.58,828.40)(0.498,-4.608){79}{\rule{0.120pt}{3.759pt}}
\multiput(292.17,836.20)(41.000,-367.199){2}{\rule{0.400pt}{1.879pt}}
\multiput(334.58,464.23)(0.499,-1.314){161}{\rule{0.120pt}{1.149pt}}
\multiput(333.17,466.62)(82.000,-212.616){2}{\rule{0.400pt}{0.574pt}}
\multiput(416.00,252.92)(0.684,-0.499){117}{\rule{0.647pt}{0.120pt}}
\multiput(416.00,253.17)(80.658,-60.000){2}{\rule{0.323pt}{0.400pt}}
\multiput(498.00,192.92)(1.571,-0.497){49}{\rule{1.346pt}{0.120pt}}
\multiput(498.00,193.17)(78.206,-26.000){2}{\rule{0.673pt}{0.400pt}}
\multiput(579.00,166.92)(3.232,-0.493){23}{\rule{2.623pt}{0.119pt}}
\multiput(579.00,167.17)(76.556,-13.000){2}{\rule{1.312pt}{0.400pt}}
\multiput(661.00,153.93)(7.362,-0.482){9}{\rule{5.567pt}{0.116pt}}
\multiput(661.00,154.17)(70.446,-6.000){2}{\rule{2.783pt}{0.400pt}}
\multiput(743.00,147.93)(10.795,-0.488){13}{\rule{8.300pt}{0.117pt}}
\multiput(743.00,148.17)(146.773,-8.000){2}{\rule{4.150pt}{0.400pt}}
\put(907,139.17){\rule{16.500pt}{0.400pt}}
\multiput(907.00,140.17)(47.753,-2.000){2}{\rule{8.250pt}{0.400pt}}
\multiput(989.00,137.93)(45.460,-0.477){7}{\rule{32.820pt}{0.115pt}}
\multiput(989.00,138.17)(340.880,-5.000){2}{\rule{16.410pt}{0.400pt}}
\put(293,844){\makebox(0,0){$+$}}
\put(334,469){\makebox(0,0){$+$}}
\put(416,254){\makebox(0,0){$+$}}
\put(498,194){\makebox(0,0){$+$}}
\put(579,168){\makebox(0,0){$+$}}
\put(661,155){\makebox(0,0){$+$}}
\put(743,149){\makebox(0,0){$+$}}
\put(907,141){\makebox(0,0){$+$}}
\put(989,139){\makebox(0,0){$+$}}
\put(1398,134){\makebox(0,0){$+$}}
\put(1349,819){\makebox(0,0){$+$}}
\put(1398.0,134.0){\rule[-0.200pt]{9.877pt}{0.400pt}}
\put(211.0,131.0){\rule[-0.200pt]{0.400pt}{175.375pt}}
\put(211.0,131.0){\rule[-0.200pt]{295.825pt}{0.400pt}}
\put(1439.0,131.0){\rule[-0.200pt]{0.400pt}{175.375pt}}
\put(211.0,859.0){\rule[-0.200pt]{295.825pt}{0.400pt}}
\end{picture}}
\caption{Lower bound on the accessibility to quantum information in a $((k,n))_q$ scheme. There is no $((k, n))_q$ scheme with $k\leq \alpha n$}
\label{fig:lower}
\end{figure}
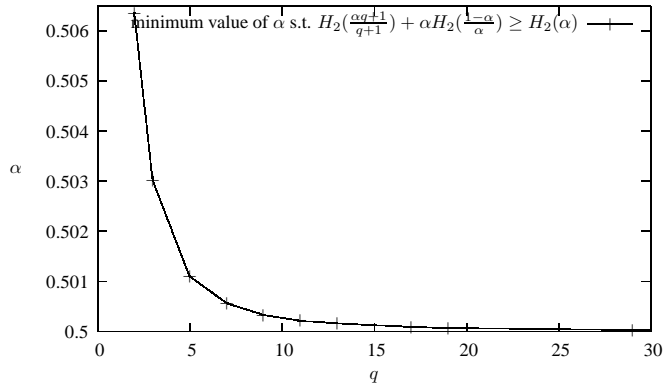

\section{Discussion}

In this work we have studied the encoding of classical and quantum information onto graph states of qudits, and its application for secret sharing schemes. 
We have given complete graphical characterization of which sets of vertices (players) can access the information, and shown how this can be done both for classical and quantum information. Using this characterization we have given bounds on which protocols are possible and how difficult the access structure is to calculate given a graph.

Whilst we have focused on the application of our results for secret sharing, there may be applications to other quantum information protocols. Indeed, the QQ encoding defined in section \ref{Section: encoding} is exactly the same encoding procedure used in measurement based quantum computing and error correction, so we can expect that these results have implications in both these domains. Furthermore, quantum secret sharing is intimately linked to error correction \cite{MM12,CGL99}. All secret sharing schemes are error correcting schemes, and the QQ protocols presented here are equivalent to all possible stabilizer codes \cite{MM12}. Thus, the existence of $((\alpha n, n))$ protocols is an existence statement about error correcting protocols too, and the no goes on secret sharing imply no-goes for all stabilizer codes - so that there are no stabilizer codes with parameters violating our lower bounds.

\section*{Acknowledgements}

The authors want to thank Mehdi Mhalla and David Cattan\'eo for fruitful discussions.  This work has been funded by the ANR-10-JCJC-0208 CausaQ grant, the FREQUENCY (ANR-09-BLAN-0410), HIPERCOM (2011-CHRI-006) projects, and by the Ville de Paris Emergences program, project CiQWii.

\section{Appendix-QQ Encoding-Decoding Operations} \label{Section:appendix}
The QQ encoding-decoding can basically be done by three typical ways. The first method is based on projective Bell measurements (possibly extended to an $|B|+1$ length state) and the two last one are accessible by local measurements and/or implementation of control operations between two qudits, which should  finally result in a similar experimental complexity. We briefly describe the three encoding method $E1, E2, E3$ and decoding $D2, D3$. ($D1$ has been done in section \ref{section:protocols}).  
For a graph $G=(V,\Gamma)$, with $d\in V$, $W:=V{\setminus} \{d\}$, a quantum secret $\ket{\xi}_S:=\sum_{i=0}^{q-1}s_i\ket{i}_S$, we write $\bar{X}:=Z_{\Gamma\{d\}}$ and $\bar{Z}:={(X_uZ_{\Gamma\{u\}})}^{-\Gamma(u,d)^{-1}}$, for $\scriptsize{u, \Gamma(u,d)\neq 0}$, as they act like logical operators over the bases states over $W$, that is $\bar{Z}\ket{i_L}=\omega^i\ket{i_L}$, $\bar{X}\ket{i_L}=\ket{(i+1)_L}$ with notation of \ref{section:protocols}. \\
$\begin{array}{ll}\textbf{E1 } &\ket{\xi}\ket{G}=\sum_{i\in \F_q }s_i\ket{i}_S\sum_{j\in\F_q}\frac{\ket{j}_D\ket{j_L}_W}{\sqrt{q}}\\
 & =\frac{1}{\sqrt{q}}\sum_{i,j\in \F_q }\ket{i}_S\ket{j}_Ds_i\ket{j_L}_W=\frac{1}{\sqrt{q}}\sum_{l\in \F_q}I_SX_D^l\bar{X}_W^l(\sum_{i\in\F_q}\ket{i}_S\ket{i}_Ds_i\ket{i_L}_W)\\
& = \frac{1}{\sqrt{q}}\sum_{l\in \F_q}I_SX_D^l\bar{X}_W^l(\sum_{k\in\F_q}\sum_{i\in\F_q}\omega^{k.i}\frac{\ket{i}_I\ket{i}_D}{q}\sum_j\omega^{-k.j}s_j\ket{j_L}_W)\\
& =\frac{1}{\sqrt{q}}\sum_{l\in \F_q}I_SX_D^l\bar{X}_W^l(\sum_{k\in\F_q}Z_I^kI_D\bar{Z}_W^{-k}\sum_{i\in\F_q}\frac{\ket{i}_I\ket{i}_D}{q}\sum_js_{j\in\F_q}\ket{j_L}_W)\\
& = \frac{1}{q}\sum_{l,k\in \F_q}Z_S^kX_D^l(\sum_{i\in\F_q}\frac{\ket{i}_S\ket{i}_D}{\sqrt{q}})\bar{X}_W^l\bar{Z}_W^{-k}\sum_{j\in\F_q}s_j\ket{j_L}_W)\end{array}$\\
so that applying the correction: $\bar{Z}^{k}\bar{X}^{-l}$ over $V{\setminus} \{d\}$, according to the syndrom $(l,k)$ of a Bell measurement over $\{S,D\}$, leaves the state over $W$ as $\sum_{i\in \F_q} s_i\ket{i_L}$. \\
$\begin{array}{ll}\textbf{E2 } &C\bar{X}_{dW}(\sum_{i\in\F_q}s_i\ket{i})\ket{0_L}=\sum_{i\in\F_q} s_i\ket{i}\bar{X}^{i}\ket{0_L}=\sum_{i\in\F_q} s_i\ket{ii_L}=\sum_{i\in\F_q} s_iX^i\ket{0}\ket{i_L}\\
=&\sum_{i\in\F_q} s_iX^i \sum_{j\in\F_q} \frac{\ket{b_j}}{\sqrt{q}} \ket{i_L}=\sum_{i\in\F_q} s_iX^i \sum_j \frac{Z^{-j}\ket{b_0}}{\sqrt{q}} \ket{i_L}=\frac{1}{\sqrt{q}}\sum_{i,j} s_i\omega^{i.j}Z^{-j}\ket{b_0} \ket{i_L}\\
=&\sum_{j\in\F_q}\frac{\ket{b_j}}{\sqrt{q}}(\bar{Z}^j\sum_{i\in\F_q}s_i\ket{i_L})\end{array}$ \\where $\ket{\,b_k\,}=Z^k\sum_i\ket{i}$ constitutes the $X$ basis ($\ket{b_0}=\ket{+}$).
so that the correction $\bar{Z}^{-i}$ over $W$, according to a $X^i_d$ measurement leaves the state to distribute as $\sum_i s_i \ket{i_L}$. \cite{BCGSZ11}\\
$\begin{array}{ll}\textbf{E3 } &C\bar{Z}_{dW}H_{d}C\bar{X}_{dW}(\sum_{i}s_i\ket{i})\ket{0_L}=C\bar{Z}_{dW}H_{d}(\sum_{i}s_i\ket{i}\ket{i_L})\\
&=C\bar{Z}_{dW}(\sum_{i}s_i\ket{b_i(q)}\ket{i_L})=\ket{+}\sum_{i}s_i\ket{i_L})\end{array}$\\
The same process can be done for the decoding by an authorised set $B$, where the operators $U_B$ and $V_B$ defined in \ref{section:protocols} will act as $\bar{Z}$ and $\bar{X}$ operators respectively. An ancilla qudit $\{a\}$ is prepared in the state $\ket{+}_a$ by $B$.\\
$\textbf{\ \ D2. } {CV^{-1}}_{aB}\ket{+}_a(\sum_{j\in\F_q}s_j\ket{j_L}_W)=\frac{1}{\sqrt{q}}\sum_{k\in\F_q}X^{-k}_a(\sum_{i\in\F_q}s_i\ket{i}_a)\ket{k_L}_W$\\
\ $\textbf{\ \ D3. } CU_{aB}.H_a.\frac{1}{\sqrt{q}}\sum_{k\in\F_q}X^{-k}(\sum_{i\in\F_q}s_i\ket{i}_a)\ket{k_L}_W=\sum_{i\in\F_q}s_i\ket{b_i}_a\sum_{k\in\F_q^m}\frac{\ket{k_L}_W}{\sqrt{q}}$\\
The explicit two qudits gate control operations $B$ has to perform are given by decoding parameters $(x_i,z_i,c)$ defined in \ref{section:protocols}.


\begin{thebibliography}{99}
\small{

\bibitem{BB06} M.~Bahramgiri,S.~Beigi, 
\emph{Graph states under the action of local Clifford group in non-binary case}
arXiv:quant-ph/0610267 (2006).

\bibitem{BCGSZ11} S.~Beigi, I.~Chuang, M.~Grassl, P.~Shor, B.~and Zeng,
  \emph{Graph concatenation for quantum codes}
J. Math. Phys. 52, 022201 (2011).

\bibitem{BCGHS06}    M.~Ben-Or, C.~Cr\'epeau, D.~Gottesman, A.~Hassidim, A.~Smith, 
  \emph{Secure Multiparty Quantum Computation with (Only) a Strict Honest.}
  Proc. 47th Annual IEEE Symposium on the Foundations of Computer Science (FOCS '06), pp. 249-260 (2006).

\bibitem{B09} A.~Broadbent, P.~Chouha, A.~Tapp,
  \emph{The GHZ state in secret sharing and entanglement simulation.}
  Third International Conference on Quantum, Nano and Micro Technologies, ICQNM'09, 59-62 (2009).

\bibitem{B94} A.~Bouchet
  \emph{Circle Graph Obstructions}
  Journal of Combinatorial Theory, Series B, Vol 60, 1,pp 107-144 (1994).
 
\bibitem{BunchHop} J.R. Bunch, J.E. Hopcroft. 
  \emph{Triangular Factorization and Inversion by Fast Matrix Multiplication.} 
  Mathematics of Computation, 28(125):231236, (1974).

\bibitem{CattaneoPerdrix} David Cattan\'eo, Simon Perdrix. 
  \emph{Parametrized Complexity of Weak Odd Domination Problems}. 
  arXiv:1206.4081 (2012).

\bibitem{CGL99} R.~Cleve, D.~Gottesman, H.K.~Lo, 
  \emph{How to share a quantum secret.}
  Phys. Rev. Lett. \textbf{83}, 648 (1999).

\bibitem{Eke91}	A.K.~Ekert,
  \emph{Quantum cryptography based on Bell’s theorem.}
  Phys. Rev. Lett., \textbf{67}, 6, pp. 661--663 (1991).
 

\bibitem{FG11} B.~Fortescue and G.~Gour,
  \emph{Reducing the quantum communication cost of quantum secret sharing.}
  IEEE Trans. Inf. Th. 58(10), pp. 6659 - 6666 (2012)

\bibitem{Gheo12} V.~Gheorghiu, 
  \emph{Generalized Semi-Quantum Secret Sharing Schemes.}
  Phys. Rev. A 85, 052309 (2012)

\bibitem{GLG09} V.~Gheorghiu, S.Y.~Looi, R.B.~ Griffiths,
  \emph{Location of quantum information in additive graph codes}
  Phys. Rev. A, \textbf{81}, 3, pp. 032326, (2010).

\bibitem{Gravier11} S.~Gravier, J.~Javelle, M.~Mhalla and S.~Perdrix, 
  \emph{On Weak Odd Domination and Graph-based Quantum Secret Sharing.}
  arXiv:1112.2495 (2011).

\bibitem{Gravier11b} S.~Gravier, J.~Javelle, M.~Mhalla and S.~Perdrix, 
  \emph{Optimal accessing and non-accessing structures for graph protocols.}
  arXiv:1109.6181 (2011).


\bibitem{HEB03} M. Hein, J. Eisert,  H. J. Briegel. 
\emph{Multiparty entanglement in graph states.} Phys. Rev. A 69, 062311 (2004).

\bibitem{HWE06}M.~Hein, W.~D\"ur, J.~Eisert, R.~Raussendorf, M.~Van~den~Nest, H.~J.~Briegel, 
\emph{Entanglement in graph states and its applications}
in Quantum Computers, Algorithms and Chaos, Proceedings of the International School of Physics Enrico Fermi, Vol. 162 (2006).


\bibitem{HBB99}	M. Hillery, V. Bu\v{z}ek, and A. Berthiaume,
  \emph{quantum secret sharing.}
  Phys. Rev. A \textbf{59}, 1829 (1999).
  
\bibitem{IbarraMH} O.H. Ibarra, S. Moran, R. Hui. 
  \emph{A Generalization of the Fast LUP Matrix Decomposition Algorithm and Applications.} 
  Journal of Algorithms, 3(1):4532656, (1982). 
  
\bibitem{JMP11} J.~Javelle, M.~Mhalla, S.~Perdrix, 
  \emph{New Protocols and Lower Bound for Quantum Secret Sharing with Graph States.}
  TQC'12. LNCS Vol 7582, pp 1-12 (2013).

\bibitem{Javelle12} J. Javelle, M. Mhalla, S. Perdrix,  
  \emph{On the Minimum Degree up to Local Complementation: Bounds and Complexity.}
  WG'12. LNCS Vol 7551, pp 138-147 (2012).

\bibitem{KanteRao}M.M.~Kant\'e, M. Rao.
  \emph{The Rank-Width of Edge-Coloured Graphs}. 
  Theory of Computing Systems, 1-46, (2012).
  
\bibitem{KMMP09} E.~Kashefi, D.~Markham, M.~Mhalla, and S.~Perdrix.
  \emph{Information flow in Secret Sharing Protocols.}
  {\em Electronic Proceedings in Theoretical Computer Science},
  9:87--97 (2009).

\bibitem{KKI99} A.~Karlsson, M.~Koashi, N.~Imoto, 
  \emph{Quantum entanglement for secret sharing and secret splitting.}
       {\sl Phys. Rev. A} \textbf{59}, 162--168, (1999).
  
\bibitem{Keet10} A.~Keet, B.~Fortescue, D.~Markham and B.~C.~Sanders, 
  \emph{Quantum secret sharing with qudit graph states.}
  Phys. Rev. A \textbf{82}, 062315 (2010).

\bibitem{MM12} A.~Marin, D.~Markham,
  \emph{On the equivalence between sharing quantum and classical secrets, and error correction.}
  arxiv:1205.4182 (2012)

\bibitem{MM13} A.~Marin, D.~Markham, 
  \emph{High dimensional CSS code and application to secret sharing.}
  in preparation (2013).

  \bibitem{MS08} D.~Markham and B.~C. Sanders, 
  \emph{Graph State for Quantum Secret Sharing.}
  Phys. Rev. A, \textbf{78}, (2008).

\bibitem{OumSeymour} S. Oum, P. Seymour. 
  \emph{Approximating rank-width and clique-width quickly}. Journal
  ACM Transactions on Algorithms (TALG), vol 5, 1-20 (2008). 
  
\bibitem{S} A.~Shamir,
  \emph{How to share a secret}.
  Communications of the ACM, \textbf{22}, 612–613 (1979).

\bibitem{SS10} L.~Sheridan and V.~Scarani. 
  \emph{Security proof for quantum key distribution using qudit systems.} 
  Phys. Rev. A \textbf{82}, 030301(R) (2010).

\bibitem{VdN04}M.~Van~den~Nest and J.~Dehaene, B. De Moor 
  \emph{Graphical description of the action of local Clifford transformations on graph states}. 
  Physical Review A (69) 022316 (2004). 

\bibitem{KKKS05} A.~Ketkar, A.~Klappenecker, S.~Kumar, P.~K.~Sarvepalli, 
  \emph{Nonbinary Stabilizer Codes Over Finite Fields }
  IEEE Trans. Inf. Th. 52, 4892 (2005).

}\end{thebibliography}
\end{document}